\DeclareMathAlphabet{\mathcal}{OMS}{cmsy}{m}{n}
\newcommand{\A}{\mathcal{A}} \newcommand{\B}{\mathcal{B}}
\newcommand{\C}{\mathcal{C}} \newcommand{\D}{\mathcal{D}}
 \renewcommand{\L}{\mathcal{L}}
 \newcommand{\N}{\mathcal{N}}
 \renewcommand{\P}{\mathcal{P}}
\renewcommand{\S}{\mathcal{S}}
 \newcommand{\X}{\mathcal{X}}
\newcommand{\Y}{\mathcal{Y}} \newcommand{\Z}{\mathcal{Z}}
\newcommand{\LTL}{{\sc ltl}\xspace}
\newcommand{\LTLf}{{\sc ltl}$_f$\xspace}
\newcommand{\ltlf}{{\sc ltl}$_f$\xspace}
\newcommand{\pltlf}{{\sc pltl}$_f$\xspace}
\newcommand{\PLTLf}{{\sc pltl}$_f$\xspace}
\newcommand{\NFA}{{NFA}\xspace}
\newcommand{\DFA}{{DFA}\xspace}
\newcommand{\Nat}{{\rm I\kern-.23em N}}
\newcommand{\mona}{{$\mathsf{MONA}$}\xspace}
\newcommand{\spot}{{$\mathsf{SPOT}$}\xspace}
\renewcommand{\LTLf}{{LTL$_f$}\xspace}
\renewcommand{\LTL}{{LTL}\xspace}
\renewcommand{\ltlf}{{LTL$_f$}\xspace}
\renewcommand{\pltlf}{{PLTL$_f$}\xspace}
\renewcommand{\PLTLf}{{PLTL$_f$}\xspace}
\newcommand{\DB}{\mathcal{DB}} 
\newtheorem{lemma}{Lemma}
\newtheorem{definition}{Definition}
\newtheorem{theorem}{Theorem}
\title{On the Power of Automata Minimization in Reactive Synthesis}
\author{Shufang Zhu
\institute{Sapienza Universit\`a di Roma\\ Roma, Italy}
% \institute{School of Computer Science and Engineering\\
% University of New South Wales\thanks{A fine university.}\\
% Sydney, Australia}
\email{zhu@diag.uniroma1.it}
\and
Lucas M. Tabajara
\institute{Rice University\\ Houston, USA}
\email{lucasmt@rice.edu}
\and
Geguang Pu\thanks{Corresponding author}
\institute{East China Normal University\\ Shanghai, China}
\email{ggpu@sei.ecnu.edu.cn}
\and
Moshe Y. Vardi
\institute{Rice University\\ Houston, USA}
\email{vardi@cs.rice.edu}
}
\begin{document}
\maketitle

\begin{abstract}

Temporal logic is often used to describe temporal properties in AI applications. The most popular language for doing so is Linear Temporal Logic~(\LTL). Recently, \LTL on finite traces, \LTLf, has been investigated in several contexts. In order to reason about \LTLf, formulas are typically compiled into deterministic finite automata~(\DFA), as the intermediate semantic representation. Moreover, due to the fact that DFAs have canonical representation, efficient minimization algorithms can be applied to maximally reduce \DFA size, helping to speed up subsequent computations. Here, we present a thorough investigation on two classical minimization algorithms, namely, the Hopcroft and Brzozowski algorithms.
More specifically, we show how to apply these algorithms to semi-symbolic
(explicit states, symbolic transition functions) automata representation.
We then compare the two algorithms in the context of an \LTLf-synthesis framework, starting from \LTLf formulas. While earlier studies on comparing the two algorithms starting from randomly-generated automata concluded that neither algorithm dominates, our results suggest that starting from \LTLf formulas, Hopcroft's algorithm is the best choice in the context of reactive synthesis. Deeper analysis explains why the supposed advantage of Brzozowski's algorithm does not materialize in practice.
\end{abstract}
\section{Introduction}
%Knowledge compilation is the problem of compiling knowledge base off-line into a target representation, which is then used for on-line reasoning~\cite{CadoliD97}. There have been a lot of studies in the direction of utilizing knowledge compilation to deal with the computational intractability of general propositional reasoning~\cite{DarwicheM02}. When is comes to practical scenarios, propositional reasoning is not enough to capture properties over sequences. 
In many situations in Formal Methods and AI, we are interested in expressing properties over a sequence of successive states. Temporal logic, especially Linear Temporal Logic~(\LTL) has been thoroughly investigated for doing so~\cite{Pnu77}. Recently, a variant of \LTL on finite traces, namely \LTLf, has been investigated~\cite{DegVa13}. \LTLf found application in numerous AI contexts, as it is suitable for expressing properties over an unbounded but finite sequence of successive states. When reasoning about actions and planning, \ltlf has been employed as a specification mechanism 
%MYV: We need citations for applications of LTLf, not LTL!
for finite-horizon temporally extended goals~\cite{CTMBM17,DegRu18}. 
%for constraints on plans~\cite{BacchusK00}, to express preferences and soft constraints~\cite{BienvenuFM06}, for specifying multi-agent systems~\cite{FHMV1995}, and for specifying norms~\cite{FisherW05}. 
As a specification language, we can use \LTLf to specify desired properties in machine learning~\cite{CamachoIKVM19,GiacomoIFP19,abs-2101-11981}, program synthesis~\cite{DegVa15,ZTLPV17,CamachoBMM18}, Business Process Management (BPM)~\cite{PesicSA07,GiacomoMGMM14,CiccioMMM17}, Markov Decision Processes~(MDPs) with {non}-Markovian rewards~\cite{BrafmanGP18}, MDPs policy synthesis~\cite{AndrewGandalf}, also non-Markovian planning and decision problems~\cite{BrafmanG19}. A general survey of applications of \LTLf in AI and CS can be found in~\cite{DegVa13,GiacomoSFR20}.

In many applications, the common technique for reasoning about \LTLf is compiling formulas into deterministic finite automata~(\DFA), cf.~\cite{ZTLPV17}. Unfortunately, the \DFA size can be, in the worst case, doubly-exponential to the length of the formula~\cite{KupfermanVa01}. Indeed, \LTLf-to-\DFA compilation has been shown to be the bottleneck of \LTLf synthesis~\cite{ZTLPV17}. On the positive side, DFAs can be fully minimized thus helping to speed up subsequent computations \cite{Hopcroft71,Brzozowski62,DegVa15}. Furthermore, there is evidence that the doubly-exponential blow-up of \LTLf-to-\DFA compilation,  does not tend to occur in practice~\cite{TabakovRV12}. This means that applications that require compiling temporal knowledge in \LTLf to DFAs can still be implemented efficiently for many instances, as long as a good minimization algorithm is used. The natural question to ask, then, is: what is the best way in practice of constructing a minimal DFA from an \LTLf formula?

An empirical evaluation of \DFA-minimization algorithms can be found in~\cite{TabakovV05}. That work compares two classical algorithms for constructing a minimal \DFA from an NFA (nondeterministic finite automaton). The first is Hopcroft's algorithm~\cite{Hopcroft71}, which first determinizes the NFA into a (not necessarily minimal) \DFA, and then partitions the state space into equivalence classes. These equivalence classes then correspond to the states of the minimal \DFA. The second is Brzozowski's algorithm~\cite{Brzozowski62}, which reverses the automaton twice, determinizing and removing unreachable states after each reversal. This sequence of operations guarantees that the resulting DFA is minimal. The conclusion reached by~\cite{TabakovV05} is that neither algorithm dominates across the board, and the best algorithm to use for a given NFA depends in part on the NFA's transition density. 

%Previous studies from [50] started with random formulas, with the lack of looking into practical applications. We would like to look deeper into the context of applications. A number of applications require the step of automata minimization. Examples are monitoring~\cite{TabakovRV12}, reactive synthesis~\cite{ZhuTLPV17}, and Business Process Management (BPM)~cite{}. In particular, it has been shown that synthesis techniques employing minimized DFA often shows dominating performance than the ones without~\cite{Tabajara19, Bansal20}. Therefore, in this paper, we focus on the context of reactive synthesis.

A few aspects make the evaluation in ~\cite{TabakovV05} unsatisfactory for our purposes. First, the algorithms were compared considering an NFA as a starting point, while we are interested in obtaining minimal DFAs from \LTLf formulas. This difference in initial representation may require certain steps of the algorithms to be implemented in a different way that affects their complexity. Second, the evaluation was performed on NFAs generated using a random model, which might not be representative of automata compiled from \LTLf formulas. Third, automata generated from formulas tend to be semi-symbolic, having their transitions represented symbolically by data structures such as Binary Decision Diagrams~(BDDs) \cite{Mona}, a commonly being used representation method that is more compact than the classical explicit representation. This semi-symbolic representation can also affect how certain operations are implemented and therefore the performance of the algorithms. Moreover, studies from~\cite{TabakovV05} started with random automata, with the lack of looking into practical applications. We would like to look deeper into the context of applications. A number of applications require the step of automata minimization. Examples are monitoring~\cite{TabakovRV12}, reactive synthesis~\cite{ZhuTLPV17}, and Business Process Management (BPM)~\cite{PesicSA07,GiacomoMGMM14,CiccioMMM17}. 

In this work we re-examine the comparison between the Hopcroft and Brzozowski algorithms, this time starting from \LTLf formulas. In particular, we focus on the context of \emph{reactive synthesis}, the algorithms of which usually make use of automata-theoretic techniques to automatically convert system properties described in high-level specification into a reactive system satisfying these properties, cf.~\cite{PnueliR89}. In the standard approach for solving this problem for \LTLf, the \LTLf specification is first compiled into a \DFA, and then a system satisfying the specification is synthesized by solving a reachability game over this \DFA~\cite{DegVa15}. The game-solving step is performed over a fully-symbolic representation of the DFA that also encodes the state space symbolically, and which is thus exponentially smaller than the explicit representation~\cite{ZTLPV17}. It has also been shown that synthesis techniques employing minimized DFA often shows dominating performance than the ones without~\cite{TabajaraV19, BLTV}. 
%Since the performance of both steps depends on the DFA size, efficient minimization algorithms are crucial for this problem. 

In our evaluation, Hopcroft's algorithm is represented by the tool \mona{}~\cite{Mona}, a sophisticated platform for constructing DFAs from temporal logical specifications, commonly being used for minimal DFA construction in \LTLf synthesis. It should be noted that, though \mona{} is able to handle full \emph{second-order logic}~(MSO), which subsumes \emph{first-order logic}~(FOL), it has been shown that expressing \ltlf{} in FOL gives better performance~\cite{ZhuPV19}. \mona{} constructs a DFA in a bottom-up fashion, first constructing small DFAs for subformulas and then progressively combining them while applying Hopcroft minimization after each step. Since there is no preexisting tool that makes use of Brzozowski's algorithm, we show here how it can be effectively simulated within the existing framework of \ltlf{} synthesis. This is done by using \mona{} to construct a minimal DFA for the \emph{reverse} language of the \ltlf{} specification, which can then be reversed, determinized and pruned of unreachable states to obtain the minimal DFA for the original language. The possible advantage of Brzozowski's algorithm is that the DFA for the reverse language of an \ltlf{} formula is guaranteed to be at most exponential, rather than doubly exponential, in the size of the formula~\cite{CKS81,DegVa13}.

We present two approaches for performing the final determinization step in Brzozowski's algorithm: an explicit approach, using routines implemented in the SPOT automata library~\cite{spot}, and a symbolic approach, which converts the reversed automaton directly into a fully symbolic DFA. The benefit of the fully symbolic approach is that it avoids constructing the semi-symbolic DFA, which may be exponentially larger than its fully-symbolic representation. On the other hand, this also means that the \DFA is converted to the fully-symbolic representation before removing the unreachable states, which can lead to this representation being more complex than necessary. After the fully symbolic representation is constructed, computing the reachable states serves only to reduce the search space during synthesis, but does not reduce the size of the representation. While symbolic determinization has been discussed in prior works~\cite{ArmoniEFKV05,MorgensternS08}, the role that symbolic determinization can play in the framework of \ltlf~\cite{ZTLPV17,BLTV} and the impact it may have on algorithmic performance has not been investigated yet.
%it should be noted that the symbolic approach we propose in this paper aims at directly constructing the fully-symbolic representation commonly being used in the standard \ltlf synthesis framework~\cite{BLTV,ZTLPV17}.

We compare Hopcroft's Algorithm and the two versions of Brzozowski's Algorithm on a number of \ltlf{}-synthesis benchmarks
%Since the ultimate \DFA is represented fully-symbolically in BDDs, which affect how certain operations are implemented and therefore the performance of the complete synthesis algorithm. We 
by evaluating not only the performance of the DFA construction, but also how the resulting fully-symbolic DFA affects the end-to-end synthesis performance. We find that, despite the minimal DFA for the reverse language having an exponentially smaller theoretical upper bound compared to the minimal DFA for the original language, in practice it is often a similar size or even larger. As a consequence, this observation suggests that Brzozowski's algorithm does not benefit in practice from performing determinization symbolically, as the fully-symbolic representation becomes much less efficient and the reachability computation does not compensate for the overhead. Finally, we observe that Hopcroft's algorithm dominates significantly in the wide majority of the cases. We thus conclude that unlike in~\cite{TabakovV05}, where the evaluation indicated that both minimization algorithms perform well in different cases, in the context of synthesis Hopcroft's algorithm is likely preferable. Moreover, the discrepancy between theory and practice, which leads to the disappearing of the exponential blow-up from the DFA of the reverse language to the DFA of the original \LTLf formula, suggests the Hopcroft approach as a promising option also in other scenarios that require obtaining minimal \DFA from \LTLf~\footnote{Some proofs are moved to the appendix due to the lack of space. See full version on arXiv.}.
%when utilizing \DFA as the target representation during \LTLf compilation. Rather than trying to work around minimization thus shifting the computation to on-line reasoning, we must take maximum advantage of its benefits and mitigate its weaknesses in off-line compilation.
\section{Preliminaries}\label{sec:pre}
\subsection{\LTLf and Pure-Past \LTLf}
Linear Temporal Logic over finite traces, called \LTLf~\cite{DegVa13} extends propositional logic with finite-horizon temporal connectives. In particular, \LTLf can be considered as a variant of Linear Temporal Logic~(LTL)~\cite{Pnu77}. The key feature of \LTLf is that it is interpreted over finite traces, rather than infinite traces as in LTL. Given a set of propositions $\P$, the syntax of \ltlf is identical to LTL, and defined as:

% \centerline{$\phi ::= \top\ |\ \bot\ |~p \in \P~|~(\neg \phi) ~| ~ (\phi_1\wedge\phi_2)~ |~ (X\phi)~|~(\phi_1 U \phi_2).$}
$$\phi ::= \top\ |\ \bot\ |~p \in \P~|~(\neg \phi) ~| ~ (\phi_1\wedge\phi_2)~ |~ (X\phi)~|~(\phi_1 U \phi_2).$$
	
$\top$ and $\bot$ represent \textit{true} and \textit{false} respectively. $X$~(Next) and $U$~(Until) are temporal connectives. Other temporal connectives can be expressed in terms of those.
%We also introduce their dual operators, namely $N$ for ``Weak Next" and $R$ for ``Release", defined as $N \phi\equiv \neg X\neg \phi$ and $\phi_1 R\phi_2\equiv \neg (\neg\phi_1 U\neg \phi_2)$. Additionally, we define the abbreviations ``Eventually" $F\phi\equiv \top U\phi$ and ``Always" $G\phi\equiv \bot R\phi$.
%We make use of the standard Boolean abbreviations, such as $\vee$~(or) and $\rightarrow$~(implies). Additionally, we define the following abbreviations ``Weak Next" $N \phi\equiv \neg X\neg \phi$, ``Eventually" $F\phi\equiv \top U\phi$ and ``Always" $G\phi\equiv \bot R\phi$, where $R$ is for ``Release".
% Other standard operators can be defined in terms of those.
% Standard boolean abbreviations, such as $\vee$ (or) and $\rightarrow$~(implies) are also used. 
A \textit{trace} $\rho = \rho[0],\rho[1],\ldots$ is a sequence of propositional assignments~(sets), in which  $\rho[m]\in 2^\P$ ($m \geq 0$) is the $m$-th assignment of $\rho$, and $|\rho|$ represents the length of $\rho$. Intuitively, $\rho[m]$ is interpreted as the set of propositions that are $true$ at instant $m$. A trace $\rho$ is \textit{infinite} if $|\rho| = \infty$, denoted as $\rho\in (2^\P)^{\omega}$, otherwise $\rho$ is \textit{finite}, denoted as $\rho\in (2^\P)^{*}$. We assume standard temporal semantics from~\cite{DegVa13}, and we write $\rho \models \phi$, if finite trace $\rho$ satisfies $\phi$ at instant $0$. We define the language of a formula $\phi$ as the set of traces satisfying $\phi$, denoted as $\L(\phi)$.

We now introduce \PLTLf, which is the \emph{pure-past} version of \LTLf that considers the past instead of the future~\cite{ZhuPV19,GiacomoSFR20}. 
%We restrict \PLTLf as a syntactical fragment of \PLTL, introduced in~\cite{Pnu77}, so that it considers only past temporal connectives. 
\PLTLf is defined as follows:

% \centerline{$\theta ::= \top\ |\ \bot\ |\ p \in \P\ |\ (\neg \theta)\ |\ (\theta_1\wedge\theta_2)\ |\ (Y\theta)\ |\ (\theta_1 S \theta_2).$}
$$\theta ::= \top\ |\ \bot\ |\ p \in \P\ |\ (\neg \theta)\ |\ (\theta_1\wedge\theta_2)\ |\ (Y\theta)\ |\ (\theta_1 S \theta_2).$$

We assume standard temporal semantics from~\cite{ZhuPV19,GiacomoSFR20}. 
%It should be noted that since \PLTLf talks only about past histories, in fact it reasons about finite traces. 
%This is because a temporal formula is asserted with respect to a specific timepoint along a trace. Thus, when considering only past connectives, when evaluating a formula $\theta$ at timepoint $i$ of an infinite trace $\rho$, the truth of the formula depends only on the prefix $\rho[0],\ldots,\rho[i]$. We therefore can consider \PLTLf as a finite-trace temporal logic. 
\PLTLf has a natural interpretation on finite traces: the formula is satisfied if it holds in the last instant of a trace. Consider finite trace $\rho$, we say that $\rho \models \theta$, if $\rho,k-1\models\theta$, where $k=|\rho|$. 
%past history until now, satisfies $\theta$, in which case $\rho = \rho[0],\rho[1],\ldots, \rho[|\rho|-1]$ satisfies $\theta$ at position $|\rho|-1$. Moreover, here, 
We define the language $\L(\theta)$ as the set of finite traces satisfying $\theta$, that is, $\L(\theta) = \{\rho \mid \rho \models \theta\}$.
%\pltlf formulas are interpreted over finite traces. 
%Given a finite trace $\rho$ and a \pltlf formula  $\theta{\phi}$, we inductively define when $\theta{\phi}$ is $true$ for $\rho$ at step $i$ ($0 \leq i < |\rho|$), written by $\rho, i \models \theta{\phi}$, as follows: 
%\begin{itemize}
%	\item $\rho, i \models \top$ and $\rho, i \not\models \bot$;
%	\item $\rho, i \models p$ iff $p \in \rho[i]$;
%	\item $\rho, i \models \neg \theta{\phi}$ iff $\rho, i \not\models \theta{\phi}$;
%	\item $\rho, i \models\theta{\phi}_1 \wedge \theta{\phi}_2$ iff $\rho, i \models \theta{\phi}_1$ and $\rho, i \models \theta{\phi}_2$;
%	\item $\rho, i \models Y\theta{\phi}$ iff $i-1 \geq 0$ and $\rho, i-1 \models \theta{\phi}$;
%	\item $\rho, i \models \theta{\phi}_1 S \theta{\phi}_2$ iff there exists $j$ such that $0 \leq j \leq i$ and $\rho, j\models \theta{\phi}_2$, and for all $k$, $j < k \leq i$, we have $\rho, k \models \theta{\phi}_1$.
%\end{itemize}
%A \pltlf formula $\theta{\phi}$ is $true$ in $\rho$, denoted by $\rho \models \theta{\phi}$, if and only if $\rho, |\rho|-1\models\theta{\phi}$. 

Consider \LTLf formula $\phi$, we can reverse it by replacing each temporal connective in $\phi$ with the corresponding \emph{past} connective from \PLTLf thus getting $\phi^R$. $X$ and $U$ are replaced by $Y$ and $S$, respectively. For a finite trace $\rho$, we define $\rho^R = \rho[|\rho|-1],\rho[|\rho|-2],\ldots,\rho[1],\rho[0]$ to be the reverse of  $\rho$.  We define the reverse of $\mathcal{L}$ as the set of reversed traces in $\mathcal{L}$, denoted as $\mathcal{L}^R$; formally, $\mathcal{L}^R = \{\rho^R \mid \rho \in \mathcal{L}\}$. The following theorem shows that \PLTLf formula $\phi^R$ accepts exactly the reverse language of $\phi$.
\begin{theorem}\label{thm:pltlf}{\rm \cite{ZhuPV19}}
Let $\mathcal{L}(\phi)$ be the language of \LTLf formula $\phi$ and $\mathcal{L}^R(\phi)$ be the reverse language, then $\mathcal{L}^R(\phi) = \mathcal{L}(\phi^R)$.
\end{theorem}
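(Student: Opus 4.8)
The statement concerns whole traces, but the past and future connectives are evaluated at interior positions, so the plan is to prove a position-indexed strengthening by structural induction on $\phi$ and then read off the theorem as a boundary case. Writing $n = |\rho|$ and recalling that $\rho^R[j] = \rho[n-1-j]$, I would establish that for every finite trace $\rho$ and every position $0 \le i < n$,
\[
  \rho, i \models \phi \iff \rho^R,\; n-1-i \models \phi^R .
\]
Taking $i = 0$ gives $\rho \models \phi$ iff $\rho^R \models \phi^R$, since $\phi$ is asserted at position $0$ while $\phi^R$, being a past formula, is asserted at the last position $n-1$ of $\rho^R$. Substituting $\sigma = \rho^R$ and using that reversal is an involution, this reads $\sigma^R \models \phi$ iff $\sigma \models \phi^R$, which is exactly $\sigma \in \mathcal{L}^R(\phi) \iff \sigma \in \mathcal{L}(\phi^R)$, i.e. the claimed language equality.

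The induction itself is mostly bookkeeping. For the atoms $p$ the two sides agree because $\rho^R[n-1-i] = \rho[i]$, and $\top,\bot,\neg,\wedge$ commute with reversal trivially via the induction hypothesis applied at the same position. The two temporal cases carry the content. For $\phi = X\psi$, whose reverse is $Y\psi^R$, I would match the clause ``$i+1 < n$ and $\rho, i+1 \models \psi$'' against the Yesterday semantics ``$n-1-i > 0$ and $\rho^R, n-2-i \models \psi^R$'', observing that $i+1 < n$ and $n-1-i > 0$ are the same constraint and that $n-1-(i+1) = n-2-i$, so the induction hypothesis at position $i+1$ closes the case. For $\phi = \psi_1 U \psi_2$, whose reverse is $\psi_1^R S \psi_2^R$, the reindexing $j \mapsto j' = n-1-j$ maps the future witness range $i \le j < n$ bijectively onto the past witness range $0 \le j' \le n-1-i$, and the intermediate range $i \le k < j$ onto $j' < k' \le n-1-i$; applying the induction hypothesis pointwise to $\psi_2$ at $j$ and to $\psi_1$ at each such $k$ then converts the Until clause into the Since clause.

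The main obstacle I expect is purely the index arithmetic, in particular keeping the strict and non-strict boundaries consistent: Until uses $i \le j < n$ with a strict upper cutoff, while Since uses $0 \le j' \le n-1-i$ with a non-strict one, and the ``meanwhile'' quantifiers flip from $i \le k < j$ to $j' < k' \le n-1-i$. I would fix the conventions for the semantics of $Y$ and $S$ up front, matching the standard definitions cited for \PLTL, and then verify that the endpoints behave correctly when $i = 0$ or $j = n-1$, since those are precisely the places where an off-by-one slip would silently break the equivalence. No single step is deep; the care lies entirely in making the reversal map $j \mapsto n-1-j$ interact correctly with the asymmetric treatment of the present instant in the Until/Since and Next/Yesterday semantics.
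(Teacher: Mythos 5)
Your proposal is correct. Note that the paper itself offers no proof of this statement: it is imported by citation from~\cite{ZhuPV19}, and the appendix proofs begin only at Theorem~4. Your position-indexed strengthening, $\rho, i \models \phi$ iff $\rho^R, n-1-i \models \phi^R$, proved by structural induction and then specialized to $i=0$ (using that \PLTLf{} formulas are evaluated at the last position, per the paper's convention $\rho \models \theta$ iff $\rho, |\rho|-1 \models \theta$), is exactly the standard argument for this result, and your handling of the two delicate cases is sound: the strong-next condition $i+1 < n$ corresponds precisely to the Yesterday condition $n-1-i > 0$, and the reindexing $j \mapsto n-1-j$ carries the Until witness range $i \le j < n$ and interval $i \le k < j$ onto the Since ranges $0 \le j' \le n-1-i$ and $j' < k' \le n-1-i$ respectively. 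The closing step is also right: since reversal is an involution, the pointwise equivalence $\rho \models \phi \iff \rho^R \models \phi^R$ yields $\{\rho^R \mid \rho \models \phi\} = \{\sigma \mid \sigma \models \phi^R\}$, which is the claimed equality $\mathcal{L}^R(\phi) = \mathcal{L}(\phi^R)$.
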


\subsection{Automata Representations} \label{sec:representations}
An \LTLf formula can be compiled into an automaton over finite words that accepts a trace if and only if that trace satisfies the formula. Here we introduce a few different automata representations. The difference between the representations here and the standard textbook representation of finite-state automata is that the alphabet is defined in terms of truth assignments to propositions.

%A given automaton over finite words can be represented in different forms. Here we consider the following three different representations, which are standard representation, explicit-state representation and purely symbolic representation. Let us start with the standard one.

\begin{definition}[Nondeterministic Finite Automata]
	An NFA is represented as a tuple $\N = (\P, \S, \S_0, \eta, Acc)$, where
% 	\\
%     	\noindent
% 	$\bullet~\rho, i \models a$ iff $a \in \rho[i]$; \\
% 	$\bullet~\rho, i \models \neg \phi$ iff $\rho,i \not\models \phi$; \\
% 	$\bullet~\rho, i \models\phi_1 \wedge \phi_2$ iff $\rho,i \models \phi_1$ and $\rho, i \models \phi_2$; \\
% 	$\bullet~\rho, i \models X\phi$ iff $i+1 < |\rho|$ and $\rho, i+1 \models \phi$; \\
% 	$\bullet~\rho, i \models \phi_1 U \phi_2$ iff there exists $j$ such that $i\leq j < |\rho|$ and $\rho, j\models \phi_2$, and for all $k$, $i \leq k < j$, we have $\rho, k \models \phi_1$.
	\begin{inparaitem}
		\item $\P$ is a finite set of propositions;
		\item $\S$ is a finite set of states;
		\item $\S_0 \subseteq \S$ is a set of initial states;
		\item $\eta: \S \times 2^\P \rightarrow 2^\S$ is the transition function such that given current state $s \in \S$ and an assignment $\sigma \in 2^\P$, $\eta$ returns a set of successor states;
		\item $Acc \subseteq \S$ is the set of accepting states.
	\end{inparaitem}
\end{definition}
If there is only one initial state $s_0$ and $\eta$ returns a unique successor for each state $s \in \S$ and assignment $\sigma \in 2^\P$, then we say that $\N$ is a deterministic finite automaton~(DFA). In this case, $\eta$ can be written in the form of $\eta : \S \times 2^\P \rightarrow \S$. The set of traces accepted by $\N$ is called the \emph{language} of $\N$ and denoted by $\L(\N)$. Moreover, we also introduce here a so-called codeterministic finite automaton~(co-DFA)~\cite{Pin87}. $\N$ is called a co-DFA if for each state $s \in \S$ and transition condition $\sigma \in 2^\P$, there is a unique predecessor $d$ such that $\eta(d, \sigma) = s$. Intuitively, an NFA with a unique accepting state $Acc = \{s_{acc}\}$ is considered as a co-DFA if reversing all the transitions and switching $\S_0$ with $Acc$ gives us a DFA. 

%Note that $\N$ is considered as a deterministic finite automaton~(DFA), if for each state $s \in \S$ and transition condition $\sigma \in 2^\P$, $\eta$ returns a unique successor such that $\eta$ can be written in the form of $\eta: \S \times 2^\P \rightarrow \S$. Moreover, we also introduce here so called \emph{Reverse} \DFA. $\N$ is considered as a reverse DFA, if reversing all the transitions and switching $s_0$ with $Acc$ giving us a \DFA. Intuitively speaking, a reverse DFA guarantees that for each state $s \in \S$ and transition condition $\sigma \in 2^\P$, there is a unique predecessor $t$ such that $\eta(t, \sigma) = s$.

A question that remains is how to represent the transition function $\eta$ efficiently. It could be represented by a table mapping states and assignments in $2^\P$ to the set of successor states, but this table would necessarily be exponential in the number of propositions. In practice, from a given state it is usually the case that multiple assignments can lead to a same successor state. These assignments can then be represented collectively by a single Boolean formula $\lambda$. For a given state, the number of such formulas is usually much smaller than the number of assignments.
Therefore, the transition function can alternatively be represented by a relation $H : \S \times \Lambda \times \S$, where $\Lambda$ is a set of propositional formulas over $\P$. We then have $(s, \lambda, d) \in H$ for a formula $\lambda$, iff $d \in \eta(s, \sigma)$ for every $\sigma \in 2^\P$ that satisfies $\lambda$. Intuitively, the tuples of $H$ can be thought of as edges in the graph representation of the automaton, labeled by the propositional formulas that match the transitions. It should be noted that \mona~\cite{Mona} adopts this representation, representing propositional formulas as Binary Decision Diagrams~(BDDs)~\cite{Bryant92}.%, and always returns the fully-minimized DFA of a given specification.

We call the above a \emph{semi-symbolic} automaton representation, as transitions are represented symbolically by propositional formulas but the states are still represented explicitly. In contrast, we now present a \emph{fully-symbolic}~(\emph{symbolic} for short) representation, in which both states and transitions are represented symbolically. In the fully-symbolic representation, states are encoded using a set of \emph{state variables} $\Z$, where a state corresponds to an assignment of $\Z$.
%In Section~\ref{sec:determinization} we will show how to obtain this representation directly from an explicit-state \NFA.

%Due to the fully determined transition relation of a deterministic automaton, there also exists a purely symbolic representation of such automaton, as so called \emph{symbolic deterministic finite automaton}~(symbolic \DFA), where both of the states and transition function are represented symbolically.
\begin{definition}[Symbolic Deterministic Finite Automaton]\label{def:symbolic_dfa}
	A symbolic \DFA of a corresponding explicit \DFA $\A = (\P, \S, s_0, \eta, Acc)$, in which $\eta$ is in the form of $\eta: \S \times 2^\P \rightarrow \S$, is represented as a tuple $\D = (\P, \Z, I, \delta, f)$, where 
	\begin{itemize}[noitemsep]
		\item $\P$ is the set of propositions as in $\A$;
		\item $\Z$ is a set of state variables with $|\Z| = \lceil \log |\S| \rceil$, and every state $s$ in the explicit DFA corresponds to an assignment $Z \in 2^\Z$ of propositions in $\Z$;
		\item $I \in 2^\Z$ is the initial assignment corresponding to $s_0$;
		\item $\delta: 2^\Z \times 2^\P \rightarrow 2^\Z$ 
		is the transition function. Given assignment $Z$ of current state $s$ and transition condition $\sigma$, $\delta(Z, \sigma)$ returns the assignment $Z'$ corresponding to the successor state $s' = \eta(s,\sigma)$;
		\item $f$ is a propositional formula over $\Z$ describing the accepting states, that is, each satisfying assignment $Z$ of $f$ corresponds to an accepting state $s \in Acc$.
	\end{itemize}
\end{definition}

Note that the transition function $\delta$ can be represented by an indexed family consisting of a Boolean formula $\delta_z$ for each state variable $z \in \Z$, which when evaluated over an assignment to $\Z \cup \P$ returns the next assignment to $z$. Since the states are encoded into a logarithmic number of state variables, depending on the structure of these formulas, the symbolic representation can be exponentially smaller than the semi-symbolic representation. 

%Apart from these two different representations of a \DFA, either explicit or symbolic, another commonly being used representation form is an intermediate one, which represents states explicitly but transition function symbolically. 
%
%\begin{definition}[Semi-Symbolic Deterministic Automaton]
%	A semi-symbolic \DFA of the corresponding explicit \DFA $\D = (2^\P, \S, s_0, \delta_e, Acc)$ is a tuple $\SD = (2^\P, \S, s_0, \delta_s, Acc)$ where
%	\begin{itemize}
%		\item The set of propositions $\P$, state set $\S$, initial state $s_0 \in \S$, and the set $Acc \subseteq \S$ of accepting states are as defined in $\D$;
%		\item $\delta_s: \S \times \Lambda \rightarrow \S$ is the transition function, where $\Lambda$ is the set of propositional formulas over $\P$.
%	\end{itemize}
%\end{definition}
%The intuition is that a formula $\lambda \in \Lambda$ represents a set of transition labels in \DFA $\D$, given by the satisfying assignments of the formula. Thus, the \SSDFA $\SD = (2^\P, \S, s_0, \delta_s, Acc)$ represents the \DFA $\D = (2^\P, \S, s_0, \delta_e, Acc)$, where one successor $t$ of state $s$ with transition condition $\sigma \in 2^\P$ in $\D$ such that $t = \delta_e(s,\sigma)$, if $t = \delta_e(s,\lambda)$ for some formula $\lambda \in \Lambda$ such that $\sigma$ is a satisfying assignment of $\lambda$.
%
%More specifically, \mona provides a \DFA in the intermediate representation, also \spot, another tool that can be used for \DFA construction from \LTLf formulas~\cite{ZhuTLPV17}.

\subsection{Minimized DFA from NFA}\label{sec:dfaminimization}
For every NFA, there exists a unique smallest DFA that recognizes the same language, called the \emph{canonical} or \emph{minimal} DFA. A typical way to construct the canonical DFA for a given NFA is to determinize the automaton using subset construction~\cite{RS59} and then minimize it using, for example, Hopcroft's DFA minimization algorithm~\cite{Hopcroft71}. The idea of Hopcroft's algorithm is as follows. Consider state $s$ in automaton $\N$, we define $\L(s)$ as the set of accepting words of $\N$ having $s$ as the initial state. Note that a minimal DFA cannot have two different states such that $\L(s) = \L(s')$. Hopcroft's algorithm computes equivalence classes of states, such that two states $s$ and $s'$ are considered equivalent if $\L(s) = \L(s')$. 
\begin{theorem}{\rm\cite{Hopcroft71}} \label{thm:hopcroft}
Let $\N$ be an NFA. Then {\centering$\A = [equivalence \circ determinize](\N)$} is the minimal DFA accepting the same language as $\N$.
%, where each equivalence corresponds to a state.
\end{theorem}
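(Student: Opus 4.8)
The plan is to establish the statement in two stages: first that $\D$ recognizes the same language as $\N$, and then that no DFA with fewer states recognizes that language. Write $\D' = \mathit{determinize}(\N)$ for the DFA produced by subset construction, so that $\D = \mathit{equivalence}(\D')$.

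For language preservation I would first recall the correctness of subset construction. After reading a word $w$, the unique state reached in $\D'$ is exactly the set $R_w \subseteq \S$ of NFA-states reachable in $\N$ along $w$, and this state is accepting in $\D'$ iff $R_w \cap Acc \neq \emptyset$, i.e. iff $\N$ accepts $w$; hence $\L(\D') = \L(\N)$. For the $\mathit{equivalence}$ step, recall that it quotients the states of $\D'$ by the relation $s \sim s'$ iff $\L(s) = \L(s')$, where $\L(s)$ is the right language of $s$. Because states in the same class share a right language, the transition and acceptance structure descends to the quotient, and a straightforward induction on $|w|$ shows that the class reached by reading $w$ from the initial class accepts exactly the words accepted from the original initial state. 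Thus $\L(\D) = \L(\D') = \L(\N)$.

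For minimality I would invoke the Myhill--Nerode characterization, phrased at the level of automata: a DFA for a language is minimal iff \myi every state is reachable from the initial state and \myii no two distinct states have the same right language. Condition \myii holds for $\D$ by construction, since $\mathit{equivalence}$ collapses precisely the states with identical right languages, leaving one representative per language. Condition \myi holds because subset construction generates only reachable subsets (or, equivalently, after unreachable states are pruned), and quotienting a reachable automaton preserves reachability of every class. Under \myi and \myii the map sending each state of $\D$ to its right language is injective, while every DFA for $\L(\N)$ admits a surjection from its reachable states onto the same collection of residual languages; hence $\D$ attains the state lower bound and, by uniqueness of the minimal DFA up to isomorphism, is that minimal DFA.

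The main obstacle is the minimality direction, specifically the justification that ``reachable plus pairwise distinguishable implies minimal.'' The cleanest route is to work directly with the residual languages (Brzozowski derivatives) of $\L(\N)$: show that the right language of each reachable state of $\D$ is such a residual, that distinct states yield distinct residuals by \myii, and that every residual arises from some reachable state by \myi, giving a bijection between the states of $\D$ and the residual languages. Since any DFA for the language must contain at least one state per residual, this pins $\D$ down as minimal. The language-preservation steps, by contrast, are routine inductions and present no real difficulty.
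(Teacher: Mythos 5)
The paper never proves this theorem: it is Hopcroft's classical result, cited from~\cite{Hopcroft71}, and the appendix proofs begin only at Theorem~\ref{thm:brzozowski-ltlf}. The paper offers just an informal sketch in Section~\ref{sec:dfaminimization} --- a minimal DFA cannot contain two states $s,s'$ with $\L(s)=\L(s')$, and $equivalence$ merges exactly such states. Your proof is a correct, standard Myhill--Nerode argument that fills in that sketch: correctness of subset construction, the observation that right-language equality is a congruence (so the quotient automaton is well defined and language-preserving), and the bijection between the states of $\D$ and the residual languages of $\L(\N)$, which yields the state lower bound and hence minimality. One subtlety deserves emphasis, and you handle it correctly: the theorem as stated is true only if $determinize$ produces just the reachable subsets (or if unreachable states are pruned), because $equivalence$ merges states but never deletes them, so an unreachable state whose right language differs from that of every reachable state would survive the quotient and destroy minimality. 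The paper's own formulation of Brzozowski's construction (Theorem~\ref{thm:brzozowski}) keeps $reachable$ as a separate operation, which makes this reading of $determinize$ the one needed for Theorem~\ref{thm:hopcroft} to hold; your parenthetical ``or, equivalently, after unreachable states are pruned'' is precisely the right caveat, and your reduction of minimality to reachability plus pairwise distinguishability via residuals is the clean way to close that step.
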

Fcuntion $determinize(\N)$ is the DFA obtained by applying subset construction to $\N$, and function $equivalence(\N)$ partitions the set of states into equivalence classes, which then form the states in the canonical DFA. The initial partition is $Acc$ and $\S \backslash Acc$.
%, which are clearly not equivalent. 
Then at each iteration this partition is refined by splitting each equivalence class, until no longer possible.
%The key point of function $Comp\_equivalence\_classes(\N)$ is an iteration where, at each step, we have a partition of states that is coarser than the equivalence partition. The initial partition is $Acc$ and $\S \backslash Acc$. Clearly these are not equivalent. Then at each iteration, we refine the partition by splitting each equivalence class. Equivalence classes are split repeatedly until a fixpoint is reached, which terminates the computation. 
%
\mona{}~\cite{Mona}, the state-of-the-art practical tool for constructing minimal DFA from logic specifications~\cite{Mona,ZTLPV17}, operates by induction on the structure of the input formulas, constructing DFAs for the subformulas and then combining them recursively, while applying Hopcroft's minimization algorithm after each step. We thus say that \mona{} follows the Hopcroft approach for minimization, through an optimized variant adapted to the semi-symbolic automata used by \mona{}.
%To be fair, this terminology is somewhat simplistic, as this approach has to be adapted to the data structures for representing automata in \mona{}.
% For more details, we refer to~\cite{Mona}.

%In spite that we name this algorithm Hopcroft, the DFA minimization process in \mona{} does not exactly implement the Hopcroft's algorithm~\cite{Hopcroft71}. Strictly speaking, \mona{} combines techniques presented by Aho, Hopcroft and Ullman~\cite{AhoHU74} with new methods adapted for use with the data structure for representing an automaton in \mona{}. For more details, we refer to~\cite{Mona}.

%There are several algorithms for obtaining the canonical DFA, in which Brzozowski's~\cite{Brzozowski62} elegant minimization algorithm is different from all other minimization algorithms. All of the remaining algorithms rely on computing equivalent classes on states. Possibly the most well known one is Hopcroft's~\cite{Hopcroft71}, which offers the best asymptotic worst-case complexity. For simplicity, we therefore use Hopcroft to name this class of minimization algorithms.
%Here, we consider Brzozowski's algorithm and Hopcroft's algorithm to represent the two main classes of minimization algorithms.
A less direct way to construct minimal DFAs from NFAs, in which there is no explicit step of minimization, is due to Brzozowski~\cite{Brzozowski62}. We use the following formulation of Brzozowski's approach~\cite{TabakovV05}.  For notation, $reverse(\N)$ is the function that maps the NFA $\N = (\P, \S, \S_0, \eta, Acc)$ to the NFA $\N^R = (\P, \S, Acc, \eta^R, \S_0)$, where $(d, \sigma, s) \in \eta^R$ iff $(s, \sigma, d) \in \eta$; $determinize(\N)$ again returns the DFA by applying subset construction to $\N$; and $reachable(\N)$ is the automaton resulting from removing all states that are not reachable from the initial states of $\N$.

\begin{theorem}{\rm\cite{Brzozowski62}} \label{thm:brzozowski}
	Let $\N$ be an NFA. Then {\centering$\A = [reachable \circ determinize \circ reverse]^2(\N)$} is the minimal DFA accepting the same language as $\N$.
\end{theorem}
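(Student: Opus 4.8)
The plan is to split the claim into two independent parts: that $\N'$ recognizes $\L(\N)$, and that $\N'$ is minimal. For the language part I would simply track the recognized language through each primitive. By definition $reverse$ sends an automaton for a language $L$ to one for $L^R$, while $determinize$ (subset construction) and $reachable$ (pruning of unreachable states) both preserve the recognized language. Hence one application of $[reachable \circ determinize \circ reverse]$ converts $\N$ into an automaton $\B$ for $\L(\N)^R$, and the second application converts $\B$ into an automaton for $(\L(\N)^R)^R = \L(\N)$. Moreover the last two operations applied are $determinize$ and $reachable$, so $\N'$ is deterministic; this establishes that $\N'$ is a DFA for $\L(\N)$, and I expect this part to be routine.

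The substance of the theorem is minimality, for which I would invoke the right-language characterization already recorded above: a DFA with all states reachable is minimal exactly when no two distinct states $s, s'$ satisfy $\L(s) = \L(s')$. The final $reachable$ secures reachability, so the task reduces to showing that the reachable subset-states produced by the final $determinize$ are pairwise inequivalent. I would isolate this as a lemma: if $\mathcal{C}$ is an NFA whose reverse $\mathcal{C}^R$ is deterministic (a \emph{co-deterministic} automaton) and in which every state can reach an accepting state, then $reachable(determinize(\mathcal{C}))$ is minimal. I would then apply it with $\mathcal{C} = reverse(\B)$: since $\B$ is a DFA, $reverse(\B)$ is co-deterministic; and since the first $reachable$ makes $\B$ trim (every state reachable from its initial state), every state of $reverse(\B)$ can reach its unique accepting state, namely the initial state of $\B$.

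The remaining work, and the main obstacle, is the lemma itself. For a subset-state $S$ obtained by subset construction on $\mathcal{C}$, its right-language is $\bigcup_{s \in S} \L_{\mathcal{C}}(s)$, where $\L_{\mathcal{C}}(s)$ is the right-language of $s$ in $\mathcal{C}$. I would first show that co-determinism forces the family $\{\L_{\mathcal{C}}(s)\}_s$ to be pairwise disjoint: a word $v$ lies in $\L_{\mathcal{C}}(s)$ iff $v^R$ labels a run of $\mathcal{C}^R$ from its initial state to $s$, and determinism of $\mathcal{C}^R$ means each $v^R$ reaches at most one state, so $v$ can belong to $\L_{\mathcal{C}}(s)$ for at most one $s$. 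Second, the co-reachability hypothesis guarantees that each $\L_{\mathcal{C}}(s)$ is nonempty. Because the pieces are disjoint and nonempty, the set $S$ is recoverable from the union $\bigcup_{s \in S} \L_{\mathcal{C}}(s)$, so distinct reachable subsets $S_1 \neq S_2$ have distinct right-languages; combined with reachability this yields minimality. The delicate points I expect to spend the most care on are getting the correspondence between right-languages in $\mathcal{C}$ and left-languages in $\mathcal{C}^R$ exactly right, and confirming that the two properties of $\B$ (determinism and trimness) really deliver both the disjointness and the nonemptiness that $reverse(\B)$ needs. Once the lemma is proved, the theorem follows by composing it with the language-tracking argument of the first paragraph.
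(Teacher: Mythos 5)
Your proof is correct, but there is nothing in the paper to compare it against: the paper states this theorem in its preliminaries as a classical result, citing Brzozowski's original paper and the formulation of Tabakov and Vardi, and its appendix proofs begin only at the next theorem (the \LTLf adaptation). What you have produced is a self-contained version of the standard argument. Your decomposition --- language tracking through both rounds, plus the key lemma that if $\C$ is co-deterministic and every state of $\C$ can reach an accepting state, then $reachable(determinize(\C))$ has pairwise-distinguishable states and is hence minimal --- is exactly the classical proof, and the steps check out: disjointness of the right-languages $\L_{\C}(s)$ follows from determinism of $\C^R = \B$; nonemptiness follows because the first pruning makes every state of $\B$ reachable, hence every state of $reverse(\B)$ co-reachable; and disjointness plus nonemptiness makes a reachable subset-state recoverable from its right-language. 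Two points are worth making explicit if this were written up. First, you invoke the converse Myhill--Nerode direction (all states reachable and pairwise inequivalent implies minimal), whereas the paper records only the forward direction (``a minimal DFA cannot have two different states such that $\L(s) = \L(s')$''); the converse is standard but needs a citation or a short proof. Second, under the paper's conventions DFAs are complete, so the subset construction may produce the empty subset as a reachable dead state; your argument does cover it (its right-language is empty while every piece $\L_{\C}(s)$ is nonempty), but a sentence acknowledging this edge case would tighten the write-up. As a side benefit, your lemma isolates precisely the fact the paper later relies on in its appendix, where minimality of the DFA recovered from the wDBA is justified by appealing to this theorem applied to the co-DFA.
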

% Theorem~\ref{thm:brzozowski} tells us that by applying to an NFA two rounds of reversal, determinization, and pruning unreachable states, in that order, the resulting DFA is minimal.

%We have the equivalent relation $E \subseteq \S \times \S$ defined as:
%$$(s, s') \in E \equiv (\L(s) = \L(s')).$$
%The equivalence relation $E$ is the greatest fixpoint of the following computation 
%$(s, s') \in E \equiv (s \in Acc \equiv s' \in Acc) \wedge (\forall p: p \in \P: (\eta(s,p), \eta(s',p)) \in E)$. All states are assumed equivalent until proven otherwise. 

%In particular, Hopcroft proposed an efficient way of splitting equivalence classes, which avoids repeated splitting of the same equivalence class and tries to split smaller class, if possible.
%This gives the best asymptotic worst-case complexity of $O(n \log n)$ for the minimization step, where $n$ indicates the number of states.

\subsection{Symbolic \LTLf synthesis}\label{sec:symbolicltlf}
% The problem of \LTLf synthesis is defined as follows:
\begin{definition}[\LTLf Synthesis]
	Let $\phi$ be an \LTLf formula over $\P$ and $\X, \Y$ be two disjoint sets of propositions such that $\X\cup \Y = \P$. $\X$ is the set of \emph{input variables} and $\Y$ is the set of \emph{output variables}. $\phi$ is \emph{realizable} with respect to $\langle\X, \Y\rangle$ if there exists a strategy $g: (2^{\X})^*\rightarrow 2^{\Y}$, such that for an arbitrary infinite sequence $\pi = X_0,X_1,\ldots\in (2^{\X})^{\omega}$ of propositional assignments over $\X$, we can find $k\geq 0$ such that $\phi$ is $true$ in the finite trace $\rho = (X_0\cup g(\epsilon)), (X_1\cup g(X_0)),\ldots , (X_k\cup g(X_0,X_1,\ldots,X_{k-1}))$.
\end{definition}
%Moreover, variables in $\X$ ($\Y$) are  called X- (Y-)variables. 
Intuitively, \LTLf synthesis can be thought of as a game between two players: the \emph{environment}, which controls the input variables, and the \emph{agent}, which controls the output variables. Solving the synthesis problem means synthesizing a strategy $g$ for the agent such that no matter how the environment behaves, the combined behavior trace of both players satisfies the logical specification $\phi$~\cite{DegVa15}. There are two versions of the synthesis problem depending on which player acts first. Here we consider the agent as the first player, but a version where the environment moves first can be obtained with small modifications. In both versions, however, the agent decides when to end the game.

The state-of-the-art approach to \LTLf synthesis is the symbolic approach proposed in~\cite{ZTLPV17}. This approach first translates the \LTLf specification to a first-order formula, which is then fed to \mona to get the fully-minimized semi-symbolic DFA. This DFA is transformed to a fully-symbolic DFA, using BDDs~\cite{Bryant92} to represent each $\delta_z$ as well as the formula $f$ for the accepting states.
%the symbolic transition function. Therefore, the transition function $\delta$ is represented as a sequence of BDDs, each of which corresponds to a Boolean formula $\delta_z: 2^\Z \times 2^\P \rightarrow \{0,1\}$.
Solving a reachability game over this symbolic DFA settles the original synthesis problem. The game is solved by performing a least fixpoint computation over two Boolean formulas $w$ over $\Z$ and $t$ over $\Z \cup \Y$, which represent the set of all winning states and all pairs of winning states with winning outputs, respectively. 
Intuitively, winning states are those from which the agent has a winning strategy, and each winning state $Z$ has a winning output $Y$ that refers to the agent action returned by the winning strategy.
$t$ and $w$ are initialized as $t_0(Z, Y) = f(Z)$ and $w_0(Z) = f(Z)$, since every accepting state is an agent  winning state. 
%Note that $t_0$ is independent of the propositions from $\mathcal{Y}$, since once the play reaches an accepting state the game is over and we do not care about the outputs anymore. 
Then $t_{i+1}$ and $w_{i+1}$ are constructed as follows:
\begin{itemize}[noitemsep]
	\item 
	$t_{i+1}(Z, Y) = t_i(Z, Y) \lor (\neg w_i(Z) \land \forall X . w_i(\delta(X, Y, Z)))$
	\item 
	$w_{i+1}(Z) = \exists Y . t_{i+1}(Z, Y)$
\end{itemize}

The computation reaches a fixpoint when $w_{i+1} \equiv w_i$. At this point, no more states will be added, and so all agent winning states have been collected. By evaluating $w_i$ on $I$ we can know if there exists a winning strategy. If that is the case, $t_i$ can be used to compute a winning strategy. This can be done through the mechanism of Boolean synthesis~\cite{DrLu.cav16}. 
%For more details, we refer to~\cite{ZTLPV17}. 
We note that extensions of {\ltlf} synthesis were studied in~\cite{ZhuTLPV17} and~\cite{ZhuGPV20}. 
In all of these works, compiling {\ltlf} formulas as corresponding DFAs proved to be the computational bottleneck.

\section{Brzozowski's Algorithm from \ltlf{}}\label{sec:brzozowski}
As mentioned in Section~\ref{sec:dfaminimization}, a variation of Hopcroft's algorithm is already implemented in the tool \mona{}, which is the standard tool employed in \ltlf{} synthesis applications. In contrast, there is no existing tool that directly implements Brzozowski's algorithm for temporal specifications. In this section, we describe how the algorithm can be adapted to compile an \ltlf formula into a minimal DFA, presenting both an explicit and symbolic version of the algorithm. This section focuses on the theory of the algorithm; implementation details can be found in Section~\ref{sec:implementation}.

%we describe how we can leverage \mona{} to perform the first half of the construction. We then propose two approaches for implementing the second half of the construction, one explicit and another symbolic. For the explicit approach, we make use of the automata library \spot{}. For the symbolic approach, we use the BDD library \cudd{} to construct and manipulate symbolic representations of the automata.

Theorem~\ref{thm:brzozowski} in Section~\ref{sec:dfaminimization} describes how to obtain a minimal \DFA from an \NFA, here we start instead from an \ltlf{} formula. This leads to the following sequence of operations:
\begin{enumerate}
    \item \textbf{Reverse DFA construction:} Construct a minimal DFA that recognizes the reverse of the language of $\phi$. This corresponds to the first round of $reachable \circ determinize \circ reverse$.
    \item \textbf{Reversal into a co-DFA:} Reverse the DFA for the reverse language into a co-DFA for the original language. This corresponds to the $reverse$ operation in the second round.
    \item \textbf{Determinization and pruning:} The last two steps, corresponding to the $determinize$ and $reachable$ operations, can be performed either explicitly or symbolically.
    \begin{enumerate}
        \item \textbf{Explicit:} Apply subset construction to the co-DFA to obtain an explicit DFA, removing states that are not reachable from the initial states.
        \item \textbf{Symbolic:} Convert the explicit co-DFA into a symbolic DFA (defined in Section~\ref{sec:representations}). Next, compute a symbolic representation of the set of reachable states of the symbolic DFA. Since removing states cannot be easily done in the symbolic representation, the symbolic set of reachable states is instead used later to prune the search space during the game-solving step.
    \end{enumerate}
\end{enumerate}
%
% We now describe each of these steps in more detail. 
%
%We now describe how each of these steps can be implemented within the context of the existing \ltlf{} synthesis framework.
%
\subsection{Reverse DFA Construction} \label{sec:reversedfa}
Starting from an \ltlf{} formula $\phi$, we first produce a minimal DFA for the \emph{reverse} language of $\phi$. Note that, being minimal, this DFA has no unreachable states. It thus corresponds to a DFA obtained by applying the first round of operations $reachable \circ determinize \circ reverse$. Although minimality is a stronger condition than reachability, having the DFA be minimal improves the performance of future steps.

To construct such a DFA, we use the technique introduced in~\cite{ZhuPV19}. We first convert \ltlf{} formula $\phi$ into a \pltlf{} formula $\phi^R$ for the reverse language. This can be done by simply replacing every temporal connective in $\phi$ with its corresponding past connective. Since \pltlf{} can be translated to first-order logic~\cite{ZhuPV19}, $\phi^R$ can be converted into a DFA and minimized, for example using Hopcroft's algorithm.

It might seem odd to generate the minimal DFA for $\phi^R$ as an intermediate step in a minimization algorithm, when one could simply directly generate the minimal DFA for the original formula $\phi$. The difference, however, is that the minimal DFA for the $\phi^R$ is guaranteed to have size at most exponential in the size of the formula~\cite{CKS81,DegVa13,GiacomoSFR20}, while the DFA for $\phi$ itself can be doubly-exponential~\cite{KupfermanVa01}. Specifically, it is shown in~\cite{DegVa13} how to convert an \ltlf{} formula to an alternating word automaton with a linear number of states, and it is shown in~\cite{CKS81} how to convert an alternating word automaton to a DFA for the reverse language with exponential state blow-up. Therefore, constructing the DFA for the reverse language is, in theory, exponentially more efficient than the direct construction.
%~\footnote{See supplementary material for more detailed discussions and empirical evaluations to better visualize the exponential gap.}.

%Putting this together, we get:
\begin{theorem} \label{thm:brzozowski-ltlf}
	Let $\phi$ be an \ltlf{} formula, $\A$ be the minimal DFA for $\phi^R$. Then $\A' = [reachable \circ determinize \circ reverse](\A)$ is the minimal DFA accepting the same language as $\phi$.
\end{theorem}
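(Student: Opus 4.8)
The plan is to derive Theorem~\ref{thm:brzozowski-ltlf} as a direct consequence of the general Brzozowski construction (Theorem~\ref{thm:brzozowski}) together with the language-reversal fact of Theorem~\ref{thm:pltlf}. The key observation is that the statement only asks us to apply \emph{one} round of $reachable \circ determinize \circ reverse$, because the starting automaton $\A$ is already a DFA for the reverse language rather than an arbitrary NFA. So the main task is to justify why starting from the minimal DFA $\A$ for $\phi^R$ lets us collapse Brzozowski's two rounds into one.

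First I would fix notation: let $\mathcal{L} = \mathcal{L}(\phi)$ be the target language, so that by Theorem~\ref{thm:pltlf} the minimal DFA $\A$ for $\phi^R$ satisfies $\mathcal{L}(\A) = \mathcal{L}(\phi^R) = \mathcal{L}^R(\phi) = \mathcal{L}^R$. The core of the argument is then to show that $\A$ can legitimately play the role of the intermediate automaton $[reachable \circ determinize \circ reverse](\N)$ produced by the first round of Brzozowski's algorithm when $\N$ is taken to be \emph{any} NFA recognizing $\mathcal{L}$ (for instance, one obtained directly from $\phi$). Since $reverse$ of such an $\N$ recognizes $\mathcal{L}^R$, and $determinize$ followed by $reachable$ yields a DFA for $\mathcal{L}^R$ with only reachable states, the result of the first round is \emph{a} reachable DFA for $\mathcal{L}^R$. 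The minimal DFA $\A$ for $\mathcal{L}^R$ is also a reachable DFA for $\mathcal{L}^R$, as noted in Section~\ref{sec:reverse-dfa}.

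The hard part — really the only conceptual subtlety — is that Brzozowski's theorem as stated requires running the \emph{exact} operation sequence $[reachable \circ determinize \circ reverse]^2$, not merely starting from some arbitrary reachable DFA for the reverse language. I would address this by invoking the well-known strengthening underlying Brzozowski's correctness proof: if $\D$ is any DFA with no unreachable states that accepts $\mathcal{L}^R$, then $[reachable \circ determinize \circ reverse](\D)$ is the minimal DFA for $\mathcal{L}$. The reason is that $reverse(\D)$ is a \emph{co}-deterministic automaton for $\mathcal{L}$ (reversing a DFA yields an automaton in which each state has a unique predecessor under each letter), and applying $determinize$ via subset construction to a co-deterministic, reachable automaton produces exactly the minimal DFA — this is precisely the Myhill--Nerode property that each reachable subset corresponds to a distinct right language. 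I would make this explicit by checking that the determinized states, being subsets of states of the co-DFA, are pairwise distinguishable, which forces minimality without any separate minimization step. This is the step I expect to require the most care, and where the appeal to the appendix proof is warranted.

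Having established that lemma, the theorem follows immediately: $\A$ is by hypothesis the \emph{minimal} DFA for $\phi^R$, hence a reachable DFA accepting $\mathcal{L}^R = \mathcal{L}^R(\phi)$, so $\A' = [reachable \circ determinize \circ reverse](\A)$ is the minimal DFA for $\mathcal{L}(\phi)$. I would close by remarking that minimality of $\A$ is in fact stronger than what is needed (reachability alone suffices for correctness), but it is desirable in practice because it reduces the size of the co-DFA fed into the final subset construction, as anticipated in the discussion preceding the theorem.
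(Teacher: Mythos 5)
You are correct, and your proof follows the same overall route as the paper's own: use Theorem~\ref{thm:pltlf} to get $\L(\A) = \L(\phi^R) = \L^R(\phi)$, observe that $\A$ has exactly the properties of the output of Brzozowski's first round (deterministic, all states reachable, accepting the reverse language), and then apply one further round of $reachable \circ determinize \circ reverse$. The difference lies in how that last step is justified, and here your version is tighter than the paper's. The paper's proof simply asserts that, because $\A$ enjoys those properties, the ``second round'' of Theorem~\ref{thm:brzozowski} yields the minimal DFA (``Apparently, $\A$ holds for all\ldots''); strictly speaking, Theorem~\ref{thm:brzozowski} as stated only concerns the exact two-round composition applied to an NFA, so this substitution tacitly relies on the strengthening you make explicit: for \emph{any} DFA $\D$ with no unreachable states accepting $\L^R$, the single round $[reachable \circ determinize \circ reverse](\D)$ is the minimal DFA for $\L$. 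Your sketch of why this holds is the standard one underlying Brzozowski's correctness --- $reverse(\D)$ is co-deterministic, and reachability of each state $q$ of $\D$ supplies a word $w$ such that $w^R$ is accepted in $reverse(\D)$ from $q$ and from no other state, so distinct reachable subsets in the determinization are pairwise distinguishable, forcing minimality with no separate minimization step. So: same decomposition and same use of Theorems~\ref{thm:pltlf} and~\ref{thm:brzozowski}, but you identify and fill a genuine (if routine) logical gap that the paper's own proof glosses over; your closing remark that reachability of $\A$, not minimality, is what the argument needs also matches the paper's discussion preceding the theorem.
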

\begin{proof}
	Since $\A$ is the minimal DFA for $\phi^R$, $\L(\A) = \L(\phi^R)$ holds. Moreover, $\phi^R$ accepts the reverse language of $\phi$ s.t. $\L(\phi^R) = \L^R(\phi)$, so we have $\L(\A) = \L^R(\phi)$. As stated in Theorem~\ref{thm:brzozowski}, the first round of $reachable \circ determinize \circ reverse$ returns a deterministic automaton that contains only reachable states and accepts the reverse language of the original input. Note all of these properties hold for $\A$. Specifically, $\A$ is a minimal DFA, and thus $\A$ is deterministic and contains only reachable states. Also, $\L(\A) = \L^R(\phi)$, so $\A$ accepts the reverse language of the original input $\phi$. Therefore, after applying the second round of $reachable \circ determinize \circ reverse$ over $\A$, we get $\A'$, the minimal DFA of $\phi$.
\end{proof}
Theorem~\ref{thm:brzozowski-ltlf} states that the construction of the minimal DFA for the reverse language is able to achieve the first round of $reachable \circ determinize \circ reverse$ in Theorem~\ref{thm:brzozowski}. In the remainder of this section we describe how to perform the second round of operations.
\subsection{Reversal into a Co-DFA}\label{sec:reverse}
The first step produces the minimal DFA $\A$ for the reverse language of $\phi$. 
%Reversing $\A$ can be done easily in linear time, by swapping initial states with final states and swapping source with destination for every transition. 
Reversing a semi-symbolic co-DFA can be done easily in linear time, by only swapping initial states with final states and swapping source with destination for every transition. Note that all transition conditions do not need to be changed.
The result is a co-DFA for the original language $\phi$. As explained in Section~\ref{sec:representations}, 
%just as a DFA is a special case of an NFA in which there is only a single transition out of a state for each assignment, 
a co-DFA is a special case of an NFA in which there is only a single transition into a state for each assignment. 
%That is, an NFA with transition function $\eta$ is a co-DFA if for every $d \in \eta(s, \sigma)$ and $d \in \eta(s', \sigma)$ it is the case that $s = s'$.

More formally, let $\A = (\P, \S, \S_0, H, Acc)$ be the semi-symbolic DFA for the reverse language, with the (deterministic) transition relation given as $H \subseteq \S \times \Lambda \times \S$, 
where $\Lambda$ is a set of propositional formulas, as described in Section~\ref{sec:representations}. This representation of the transition relation is easy to obtain from the output of \mona{}. 
Reversing $\A$ produces the co-DFA $\C = (\P, \S, Acc, H^R, \S_0)$, where
{\centering$H^R = \{ (d, \lambda, s) \mid (s, \lambda, d) \in H \}.$} Since co-DFA is a special case of NFA, for simplicity, later we still use NFA to refer to this co-DFA.

\subsection{Explicit Minimal DFA Construction} \label{sec:explicit_construction}
The standard way of determinizing an NFA is using subset construction. This construction can be performed in the semi-symbolic representation, with explicit states and symbolic transitions. In this case, each state in the resulting DFA represents a subset of the states in the NFA, and a transition between two states representing subsets $S_1$ and $S_2$ is labeled by the disjunction of all labels $\lambda$ such that $(s_1, \lambda, s_2) \in H$ for $s_1 \in S_1$ and $s_2 \in S_2$. In this semi-symbolic representation, finding the reachable states can be performed by a simple graph search on the graph of the automaton.
%The classic way of obtaining the minimal \DFA is employing an explicit approach. The explicit approach thus generates an explicit minimal \DFA. In particular, for such a \DFA, the states are explicit, while the transitions can be symbolic, leading to a more compact representation. More specifically, the transition function is represented by a relation $H : \S \times \Lambda \times \S$, where $\Lambda$ is a set of propositional formulas over $\P$, as described in Section~\ref{sec:representations}.

The problem with this explicit-state approach is that the subset construction causes an exponential blowup in the state space. This blowup can nullify the advantage obtained by constructing an exponential DFA for $\phi^R$ rather than a doubly-exponential DFA for $\phi$. In the next section we describe how this problem may be mitigated by instead directly constructing a fully-symbolic representation of the DFA.

\subsection{Symbolic Minimal DFA Construction}\label{sec:symbolic_construction}
As described in Section~\ref{sec:symbolicltlf}, the state-of-the-art approach for solving \ltlf{} synthesis uses a fully-symbolic representation of the DFA, which as noted in Section~\ref{sec:representations} can be exponentially smaller. Therefore, the exponential blowup caused by the explicit subset construction described in Section~\ref{sec:explicit_construction} above might be canceled out when the DFA is made fully-symbolic. Constructing the explicit DFA, then, seems like a waste that could be prevented by directly obtaining a symbolic DFA from the semi-symbolic co-DFA. With that in mind, in this section we describe an alternative approach that performs the subset construction and pruning of unreachable states symbolically.

\subsubsection{Symbolic Subset Construction} \label{sec:determinization}

The intuition of the symbolic determinization procedure is that, after subset construction, each state in the \DFA corresponds to a set of \NFA states. Each \DFA state can therefore be represented by an assignment to a set of Boolean variables, one for each \NFA state, where the variable is set to \emph{true} if the corresponding state is in the set. This corresponds naturally to a symbolic representation $\D$ where each explicit state of the \NFA is a state variable in $\D$. 
Therefore, the set of NFA states $\S$ is overloaded as the set of state variables in $\D$. Moreover, in addition to denoting a set of NFA states, $S$ here is also used to denote a DFA state. In this way, $\S$ is able to encode the entire state space of $\D$. 
%MYV: added
This approach is reminiscent of the symbolic determinization construction studied in \cite{ArmoniEFKV05} in the context of SAT-based safety LTL model checking, except that our symbolic approach here is BDD-based. (The symbolic determinization construction in \cite{MorgensternS08} is in the context of B\"uchi and co-B\"uchi automata.)

Recall that the transition function $\delta: 2^\S \times 2^\P \rightarrow 2^\S$ in $\D$ can be represented as an indexed family $\{\delta_s \mid 2^\S \times 2^\P \rightarrow \{0,1\} \mid s\in\S\}$. Intuitively speaking, given current \DFA state $S \in 2^\S$ and transition condition $\sigma \in 2^\P$, $\delta_s$ indicates whether state variable $s$ is assigned as $true$ or $false$ in the successor state. Variable $s$ is assigned as $true$ if there is a transition in the \NFA from a state in $S$ that leads to $s$ under transition condition $\sigma$, and $false$ otherwise. 

Therefore, given an NFA $\N = (\P, \S, \S_0, H, Acc)$, the symbolic determinization for the symbolic \DFA $\D = (\P,\S,I,\delta,f)$ proceeds as follows:
\begin{itemize}
	\item 
	$\S$ is the set of state variables;
	\item 
	$I \in 2^\S$ is such that $I(s)=1$ if and only if $s \in \S_0$;
	\item $f = \bigvee_{s\in Acc} s$.
	\item 
	$\delta_s: 2^\S \times 2^\P \rightarrow \{0,1\}$ is such that $\delta_s(S, \sigma) = 1$ iff $(d,\lambda,s) \in H$ for some $d$ such that $S(d) = 1$ and $\lambda$ such that $\sigma \models \lambda$. Each $\delta_s$ can be represented by a formula (or BDD) $\delta_s = \bigvee_{(d, \lambda, s) \in H} (d \land \lambda)$, with $d$ interpreted as a state variable.
\end{itemize}
In order to show that the symbolic determinization described above is correct, i.e., that $\L(\D) = \L(\N)$, we need to prove that the state where the \DFA $\D$ reaches after reading a trace $\rho$ corresponds exactly to the set of states where the \NFA $\N$ can reach after reading $\rho$, which follows the standard subset construction. In the following, we use $\delta(S, \rho)$ to denote the \DFA state that is reached from $S$ by reading $\rho$. Likewise, $H(S, \rho)$ denotes the set $S' \subseteq \S$ of all \NFA states that can be reached from all states $s \in S$ by reading $\rho$.
\begin{lemma}\label{lem}
Let $\rho \in (2^\P)^*$ be a finite trace. \DFA state $\delta(I,\rho)$ encodes the set of \NFA states $H(\S_0,\rho)$, that is, $\{s \mid \delta(I,\rho)(s) = 1\} = H(\S_0,\rho)$.
\end{lemma}

%\begin{proof}
%	We prove this by an induction on the length of $\rho$.
%	\begin{itemize}
%		\item \emph{Base Case} $|\rho|$ = 0: Then $\rho = \epsilon$. Now $\delta(I,\epsilon) = I$ and $H(\S_0,\epsilon) = \S_0.$ Therefore, $\{s \mid I(s) = 1\} = \S_0$ holds.
%		
%		\item \emph{Induction Hypothesis}: Assume inductively that the statement holds for every $\rho$ such that $|\rho| < n$.
%		
%		\item \emph{Inductive Step}: If $|\rho| = n$ then $\rho = u\sigma$ with $|u| = n-1$ and $\sigma \in 2^\P$. $\delta(I,u\sigma) = S_n$ iff $\exists S_{n-1} \in 2^\S.~\delta(I,u) = S_{n-1}$ and  $\delta(S_{n-1},\sigma) = S_n$. $H(\S_0,u) = \{s \mid S_{n-1}(s) = 1\}$ holds from inductive hypothesis. Therefore, for all states $s \in \bigcup_{d \in H(\S_0,u)} H(\{d\},\sigma)$, then $\delta_s(\{d\},\sigma) = 1$ (where $\{d\}$ also denotes an assignment where only $d$ is set to \emph{true}). This is because $\delta_s = \bigvee_{(d,\lambda, s) \in H} (d\wedge \lambda)$ and $\sigma$ is such that $\sigma \models \lambda$. In this case, $S_n(s) = 1$ holds. Therefore, $S_n(s) = 1$ iff $s \in \bigcup_{d \in H(\S_0,u)} H(\{d\},\sigma)$ holds, and therefore $H(\S_0,\rho) = \{s \mid \delta(I,\rho)(s) = 1\}$ is true.
%	\end{itemize}
%\end{proof}

The following theorem follows directly from Lemma~\ref{lem}.
\begin{theorem}
	$\D$ is equivalent to $\N$, that is, $\L(\D) = \L(\N)$.
\end{theorem}

%\begin{proof}
%	To show the equivalence of $\D$ and $\N$, we should prove that given $\rho \in (2^\P)^*$, $\rho \in \L(\D)$ if and only if $\rho \in \L(\N)$, where $\L(\N)$ represents the set of accepted words of automaton $\N$, the same for $\L(\D)$.
%	
%	An arbitrary finite trace $\rho$ is accepted by NFA $\N$ if and only if the state where $\N$ could stay after reading $\rho$ is an accepting state such that $H(\S_0,\rho) \cap Acc \neq \emptyset$. We denote this state as $d$ and $d \in H(\S_0,\rho) \cap Acc$. By Lemma~\ref{lem} we have $\delta(I,\rho)(d) = 1$ and $f(d) = 1$. That is to say, $\delta(I,\rho) \models f$ holds, in which case the state where $\D$ stays after reading $\rho$ is an accepting state. This leads to the conclusion that $\rho$ is accepted by $\D$. 
%\end{proof}

The following theorem states the computational complexity of the symbolic determinization described above. The limiting factor is the construction of $\delta_s$, each of which takes linear time. Therefore, the total complexity is quadratic.
\begin{theorem}
	The symbolic determinization can be done in quadratic time on the size of the NFA.
\end{theorem}
%\begin{proof}
%	Consider semi-symbolic NFA $\N = (\P, \S, \S_0, H, Acc)$ with $n$ states. For each state $s \in \S$, the formula $\delta_s$ is constructed by enumerating all possible predecessors of $s$. Thus, each $\delta_s$ is computed with an $O(n)$ time complexity. Therefore, constructing the transition function $\{\delta_s: 2^\S \times 2^\P \rightarrow \{0,1\} \mid s\in\S\}$ can be done in $O(n^2)$ time. Moreover, both of $I$, the initial state, and $f$, the set of accepting states, can be computed with $O(n)$ time complexity. We then conclude that the symbolic determinization can be done in $O(n^2)$ time on the number of states $n$ in the NFA. 
%\end{proof}

\subsubsection{Symbolic State-Space Pruning}
The symbolic subset construction described in the previous subsection allows us to obtain a DFA for $\phi$ directly in symbolic representation. Although this representation is more compact, it presents further challenges for the final step of pruning unreachable states. This is because in the symbolic DFA the state space is fixed by the set of state variables. Since states are not represented explicitly, but rather implicitly by assignments over these variables, there is no easy way to remove states.

The alternative that we propose is to instead compute a symbolic representation of the set of reachable states, which can then be used during game-solving to restrict the search for a winning strategy. This means that the state space of the automaton, implicitly represented by the state variables, is not minimized, but during game-solving the additional, unreachable states are ignored.

We denote by $r(Z)$ the Boolean formula, over the set of state variables $\Z$, that is satisfied by an assignment $Z \in 2^\Z$ if $Z$ encodes a state that is reachable from the initial state. We compute $r(Z)$ for the symbolic DFA $\D = (\P,\Z,I,\delta,f)$ by iterating the following recurrence until a fixpoint:
\begin{itemize}
	\item 
	$r_0(Z) = I(Z)$
	\item 
	$r_{i+1}(Z) = r_i(Z) \lor \exists Z' . \exists X . \exists Y . r_i(Z') \land (\delta(Z', X \cup Y) = Z)$
\end{itemize}

%\begin{alignat*}{2}
%    &r_0(Z) = I(Z) \\
%    &r_{i+1}(Z) = r_i(Z) \lor \exists Z' . \exists X . \exists Y . r_i(Z') \land (\delta(Z', X \cup Y) = Z)
%\end{alignat*}
Once a fixpoint is reached, the resulting formula $r(Z)$ denotes the set of reachable states of the automaton. Then, during the computation of the winning states as described in Section~\ref{sec:symbolicltlf}, we restrict $t_i$ after each step to only those values of $Z$ that correspond to reachable states.

\section{Implementation and Evaluation} \label{sec:experiments} 
As mentioned in Section~\ref{sec:dfaminimization}, Hopcroft's algorithm is represented by \mona{}~\cite{Mona}, a sophisticated platform for obtaining minimized automata from logic specifications, in the way of constructing a minimal DFA by first generating DFAs for subformulas, then combining them recursively while applying Hopcroft's algorithm on the intermediate DFAs. 
% \mona{} constructs a minimal DFA by first generating DFAs for subformulas, then combining them recursively while applying Hopcroft's algorithm on the intermediate DFAs. 
For more details of \mona{}, we refer to~\cite{Mona}. 
In this section, we first present details of our implementation of Brzozowski's algorithm, and then show an experimental comparison between the two different minimization algorithms.

\subsection{Implementation} \label{sec:implementation}
As shown in Section~\ref{sec:brzozowski}, Brzozowski's algorithm consists of three steps: 1) reverse DFA construction, 2) reversal into a co-DFA, and 3) determinization and pruning. 
%We borrow the techniques from~\cite{ZhuPV19} to construct the Reverse DFA from \LTLf formula. More specifically, consider \LTLf formula $\phi$, we transform it into the corresponding \PLTLf formula $\phi^R$ that accepts the reverse language of $\phi$. $\phi^R$ can then be converted into a first-order logic formula and given to \mona{} to produce a minimal DFA.
For the first step, we translate the \PLTLf formula $\phi^R$ into a first-order formula $fol(\phi^R)$ following the translation in~\cite{ZhuPV19} and then use \mona{} to construct the DFA for $\phi^R$. Since DFAs returned by \mona{} are always minimal, the reverse DFA constructed in this step is guaranteed to be at most exponential in the size of \LTLf formula $\phi$.
%As for the reversal into a co-DFA, this operation is practically intuitive, since we only need to switch the source and destination states for each transition.
Reversing this DFA into a co-DFA for $\phi$ is straightforward. Instead of implementing the reversal step as a separate operation, we optimize by performing the reversal while determinizing. There are two different versions of the determinization and pruning step: explicit and symbolic. We now elaborate on them. 
Note that each version starts with the DFA of $\phi^R$, and we combine the reversal and the subsequent operations of subset construction followed by state-space pruning. 

\subsubsection{Explicit Minimal DFA Construction}
Inspired by~\cite{BLTV}, we borrow the rich APIs from \spot{}~\cite{spot}, a well-developed platform for automata manipulation, to perform each computation step, subset construction in particular. It should be noted that, \spot{} adopts the semi-symbolic representation for automata, where the states are explicit and transitions are symbolic, and therefore, we use it to implement the explicit approach. Note that \spot{} is a platform for $\omega$-automata~(automata over infinite words) manipulation. Therefore we represent the co-DFA as a weak B\"uchi Automaton~(wBA)~\cite{DaxEK07}.

Since the reversal step is straightforward, to simplify the description we consider the co-DFA as the starting point to better show the implementation details. The transformation from co-DFA to wBA follows the techniques presented in~\cite{BLTV}. Intuitively, the technique of transforming to wBA is similar to the translation from \LTLf to \LTL~\cite{DegVa13}. Note that in the translation from \LTLf to \LTL, a fresh proposition $alive \notin \P$ is introduced and required to stay $true$ until the \LTLf formula is satisfied and then stay $false$ forever. In the transformation from co-DFA to wBA, we again use the same proposition $alive$ and introduce a $sink$ state that is triggered by the first moment of  $alive$ being false, such that the finite trace is accepted. Moreover, this $sink$ state is considered as the unique accepting state in the wBA to make sure that $alive$ stays $false$ forever. Formally, if a co-\DFA accepts language $\L(\C)$, then its wBA accepts infinite words in $\{(\rho \wedge alive) \cdot(\neg alive)^\omega  \mid \rho \in \L(\C) \}$, where $\rho \wedge alive$ denotes that $alive$ holds at each instant of finite trace $\rho$. Given co-\DFA $\C = (\P, \S, \S_0, H, Acc)$, we construct the wBA as follows:
\begin{inparaenum}[1)]%[label=\alph*)]
	\item 
	\item 
	introduce an extra state $sink$;
	\item 
	for each accepting state $s$ in $Acc$, add a transition from $s$ to $sink$, with transition condition $\neg alive$;
	\item
	for each transition in between states in $\C$, change the transition condition $\lambda $ to $\lambda \wedge alive$;
	\item
	add a self-loop for state $sink$ on $\neg alive$;
	\item
	assign $sink$ as the unique accepting state.
\end{inparaenum}
\begin{theorem}
	Let $\C$ be a co-\DFA, and $\B$ be the wBA generated from the construction (1)-(5) above. 
	Then we have $\L(\B) =  \{(\rho \wedge alive) \cdot(\neg alive)^\omega  \mid \rho \in \L(\C) \}$.
\end{theorem}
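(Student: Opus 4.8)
The plan is to prove the two inclusions separately, in each case passing between accepting runs of the co-\DFA $\C$ on finite traces and accepting runs of the wBA $\B$ on infinite words. Before the two directions I would record three structural facts that follow immediately from the construction (1)--(5). First, by step (3) every transition among the original states of $\C$ carries the conjunct $alive$, so such a transition can be taken only while reading a letter in which $alive$ holds. Second, by steps (2) and (4) the state $sink$ can be entered only from a state of $Acc$ on reading $\neg alive$, and once entered it can only loop on itself on $\neg alive$; hence $sink$ is absorbing and is entered at most once along any run. Third, since $sink$ is the unique accepting state (step (5)) and forms a trivial strongly connected component, the weak-\buchi acceptance condition reduces to the requirement that an accepting run eventually reach, and therefore stay in, $sink$.

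For the inclusion $\supseteq$, I would take $\rho \in \L(\C)$ of length $n$ together with a witnessing run $q_0, \dots, q_n$ of $\C$, where $q_0 \in \S_0$, $q_n \in Acc$, and each step uses a transition $(q_i, \lambda_i, q_{i+1}) \in H$ with $\rho[i] \models \lambda_i$. Reading the word $w = (\rho \wedge alive)\cdot(\neg alive)^\omega$, the same states $q_0, \dots, q_n$ form a valid partial run of $\B$, since $\rho[i]\cup\{alive\}\models \lambda_i \wedge alive$; at $q_n \in Acc$ the first $\neg alive$ letter fires the transition of step (2) into $sink$, and the self-loop of step (4) consumes the remaining $(\neg alive)^\omega$. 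This run visits $sink$ infinitely often and is therefore accepting, so $w \in \L(\B)$. The degenerate case $n = 0$, where $\rho$ is empty and some initial state is already accepting, is handled identically, the run jumping to $sink$ at the first letter.

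For the inclusion $\subseteq$, I would start from an accepting run of $\B$ on an infinite word $w$ and use the structural facts to put it into a normal form: by the third fact the run is eventually in $sink$, and by the second fact it enters $sink$ exactly once, say after reading $n$ letters, and loops thereafter. Hence the run has the shape $q_0, \dots, q_n, sink, sink, \dots$ with the $q_i$ among the original states, $q_0 \in \S_0$, and $q_n \in Acc$, since $Acc$ contains the only states from which $sink$ is reachable. The first fact forces $alive$ to hold at positions $0, \dots, n-1$ and $\neg alive$ at every position from $n$ on, so $w = (\rho \wedge alive)\cdot(\neg alive)^\omega$ with $\rho[i] = w[i]\cap\P$. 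Projecting out $alive$ turns each transition $(q_i, \lambda_i \wedge alive, q_{i+1})$ back into $(q_i, \lambda_i, q_{i+1}) \in H$ with $\rho[i]\models\lambda_i$, so $q_0, \dots, q_n$ witnesses $\rho \in \L(\C)$ and places $w$ in the right-hand set.

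The main obstacle is the backward direction, and specifically justifying this normal form of accepting runs: I must argue rigorously that an accepting run can neither linger among the original states forever (ruled out because $sink$ is the only accepting state and no original component is accepting) nor leave $sink$ once it has entered (ruled out because $sink$ is absorbing). I would also note that the co-\DFA hypothesis on $\C$ is not actually used in this language equivalence, as the argument goes through verbatim for an arbitrary \NFA; reverse-determinism is exploited only by the subsequent subset construction, not by the $alive$-padding itself.
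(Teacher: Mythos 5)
Your proposal is correct and follows the same overall strategy as the paper's proof: both directions are handled by translating accepting runs between $\C$ and $\B$, using the transition relabeling with $alive$, the jump from $Acc$ into $sink$ on $\neg alive$, and the absorbing self-loop on $sink$. However, your backward direction is noticeably more rigorous than the paper's. The paper's proof of the inclusion $\L(\B) \subseteq \{(\rho \wedge alive)\cdot(\neg alive)^\omega \mid \rho \in \L(\C)\}$ begins ``for every infinite trace $(\rho \wedge alive)\cdot(\neg alive)^\omega$ that can be accepted by $\B$,'' i.e., it quantifies only over accepted words that \emph{already} have the target shape, and never argues that an arbitrary word accepted by $\B$ must have that shape. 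Your structural facts close exactly this gap: since $sink$ is the unique accepting state, is absorbing, and is reachable only from $Acc$ on a $\neg alive$ letter, every accepting run has the normal form $q_0,\dots,q_n,sink,sink,\dots$, which forces $alive$ to hold on the first $n$ letters and $\neg alive$ thereafter, so the word is of the form $(\rho\wedge alive)\cdot(\neg alive)^\omega$ with $q_n \in Acc$ witnessing $\rho \in \L(\C)$. Your closing remark that co-determinism of $\C$ is never used is also accurate and worth keeping: the equivalence holds for an arbitrary NFA, and the co-DFA property matters only for the subsequent subset construction (it is what makes the resulting DFA minimal in the follow-up theorem, via Brzozowski's argument).
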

%\begin{proof}
%	For every finite trace $\rho = \rho[0],\rho[1],\dots,\rho[|\rho|-1] \in \P^*$ that can be accepted by $\C$, there is an accepting run that starts from an initial state $s \in \S_0$, and ends at an accepting state $d \in Acc$. According to the construction (1)-(5) described above, we have that for infinite trace $(\rho \wedge alive) \cdot(\neg alive)^\omega$, there exists a run over $\B$ that starts from $s$ and visits $d$ after reading $\rho[|\rho|-1] \wedge alive$. Next, this run proceeds to state $sink$ after reading the first $\neg alive$ and stays there forever with $\neg alive^\omega$. Thus $(\rho \wedge alive) \cdot(\neg alive)^\omega$ is indeed accepted by $\B$.
%	
%	For every infinite trace $(\rho \wedge alive) \cdot(\neg alive)^\omega$ that can be accepted by $\B$, where $\rho = \rho[0],\rho[1],\dots,\rho[|\rho|-1] \in \P^*$, there is an accepting run that starts from an initial state $s \in \S_0$, and transits to the unique accepting state $sink$ after reading the first $\neg alive$, then gets stuck there forever with $\neg alive^\omega$. According to the construction (1)-(5) described above, we have that for finite trace $\rho$, there exists a run over $\C$ that starts from $s$ and visits $d$ after reading $\rho[|\rho|-1]$. Since there is an edge from $td$ to $sink$ on $\neg alive$, $d$ is an accepting state of $\C$. Thus $\rho \in \L(\C)$. 
%\end{proof}

Then, we are able to use SPOT APIs for wBA to conduct subset construction and unreachable states pruning, thus obtaining the wDBA $\DB$. The corresponding API functions are $\mathsf{tgba\_powerset()}$ and $\mathsf{purge\_unreachable\_states()}$, respectively. Finally, we convert the wDBA $\DB$ back to \DFA as follows:
\begin{inparaenum}[a)]%[label=\alph*)]
	\item 
	remove the unique accepting state $\{sink\}$ and transitions leading to or coming from $\{sink\}$;
	\item 
	eliminate $alive$ on each transition by assigning it to $true$;
	% the transition condition such that having the value of it as $true$ if it is $alive$, and $false$ if it is $\neg alive$;
	%in the sense of keeping all other propositions but ignoring the occurrence of proposition $alive$;
	\item 
	assign all the states that move to $\{sink\}$ with transition condition $\neg alive$ as the accepting states of the \DFA.
\end{inparaenum}

\begin{theorem}
	Let $\C$ be a co-\DFA, $\B$ the corresponding wBA, $\DB$ the wDBA from SPOT, and $\A$ be the \DFA obtained from the construction (a)-(c) above.
	Then $\A$ is minimal.
\end{theorem}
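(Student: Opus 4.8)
The plan is to show that the entire pipeline --- from the co-\DFA $\C$ through the wBA $\B$, SPOT's powerset and purge operations, and the back-conversion (a)--(c) --- computes exactly $[reachable \circ determinize](\C)$, and then to invoke Theorem~\ref{thm:brzozowski-ltlf}. Recall that $\C = reverse(\A_0)$, where $\A_0$ is the minimal \DFA for $\phi^R$ produced in the reverse-\DFA step. Theorem~\ref{thm:brzozowski-ltlf} already tells us that $[reachable \circ determinize \circ reverse](\A_0)$ is the minimal \DFA for $\phi$; hence it suffices to prove that $\A$ is isomorphic to $[reachable \circ determinize](\C)$.

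First I would establish a bijection between the macrostates explored by SPOT's powerset construction on $\B$ and the subsets arising in the standard subset construction on $\C$. By the construction (1)--(5), every transition internal to $\C$ is tagged with $alive$, while each accepting state of $\C$ gains a $\neg alive$-transition to the fresh $sink$; by the preceding theorem, $\L(\B) = \{(\rho \wedge alive)\cdot(\neg alive)^\omega \mid \rho \in \L(\C)\}$. Determinizing $\B$ therefore tracks, along any prefix in which $alive$ holds, exactly the set of $\C$-states reachable under subset construction, and reaches $\{sink\}$ precisely upon reading $\neg alive$ from a macrostate that contains an accepting state of $\C$. I would make this formal by induction on trace length, showing that the reachable macrostates of $\DB$ other than $\{sink\}$ are in bijection with the reachable subsets of $determinize(\C)$ and that the bijection commutes with transitions labeled over $\P$ (with $alive$ set to $true$).

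Second, SPOT's \texttt{purge\_unreachable\_states()} is by definition the $reachable$ operation, so after purging the surviving macrostates of $\DB$ correspond exactly to the states of $[reachable \circ determinize](\C)$. It then remains to check that the back-conversion (a)--(c) inverts the $alive$-encoding faithfully: removing $\{sink\}$ and its incident edges (a) and setting $alive$ to $true$ (b) recover the $\P$-labeled transitions of subset construction, while marking as accepting precisely those macrostates that had a $\neg alive$-edge to $\{sink\}$ (c) reproduces the subset-construction acceptance condition ``the subset meets $Acc$''. Consequently $\A$ has the same states, transitions, and accepting set as $[reachable \circ determinize](\C)$, and combining this with Theorem~\ref{thm:brzozowski-ltlf} yields that $\A$ is the minimal \DFA for $\phi$.

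The main obstacle I anticipate is the first step: verifying that the weak-B\"uchi powerset construction used by SPOT, which operates over infinite words and must track the $alive$ flag, coincides exactly with finite-word subset construction on $\C$, with no spurious macrostates that could survive the acceptance recovery in (c). In particular I would need to confirm that \texttt{tgba\_powerset()} produces the standard subset construction on the alive portion of $\B$, and that the $sink$ component interacts with the rest of the automaton only through the $\neg alive$ transitions, so that stripping it off cleanly inverts the encoding.
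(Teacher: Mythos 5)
Your proposal is correct and takes essentially the same route as the paper: show that SPOT's powerset plus purge on the $alive$-encoded wBA, after the back-conversion (a)--(c), yields exactly $[reachable \circ determinize](\C)$, then conclude minimality from the Brzozowski-style theorem. The ``main obstacle'' you flag---that the $sink$ component interacts with the rest only through $\neg alive$ edges---is precisely what the paper proves as its key lemma: from any reachable macrostate, reading $\neg alive$ sends accepting states to $sink$ and disables all other transitions, so $\{sink\}$ is the only reachable macrostate containing $sink$, after which the argument proceeds exactly as you describe.
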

%\begin{proof}
%	First we prove that $\{sink\}$ is the only reachable subset containing $sink$. Let $\{s_1, \ldots, s_k\}$ be a reachable subset, and consider the transitions taken when $alive$ is false. Every $s_i \in Acc$ moves into $sink$. For every $s_j \not \in Acc$, every transition is of the form $\lambda \land alive$, and therefore no transition is taken when $alive$ is false. Therefore, the successor subset is $\{sink\}$. Since there is no way to reach $sink$ when $alive$ is true, no other subset containing $sink$ is reachable. Furthermore, since $sink$ only has a transition back to itself, so does $\{sink\}$.
%	
%	Since $\{sink\}$ is the only state containing $sink$ in $\DB$, every other state in $\DB$ corresponds to a subset of non-$sink$ states in $\B$, or more specifically to a subset of states in the co-\DFA $\C$. Thus after removing $\{sink\}$ and eliminating proposition $alive$ from $\DB$, the automaton $\A$ that we get is exactly the same as the result of performing subset construction and removing unreachable states over $\C$. According to Theorem~\ref{thm:brzozowski}, we have that $\A$ is minimal.
%\end{proof}

\subsubsection{Symbolic Minimal DFA Construction}\label{sec:symbolic_brzozowski_implementation}
Instead of having each DFA state explicit, the symbolic approach described in Section~\ref{sec:symbolic_construction} is able to directly construct a symbolic DFA as in Definition~\ref{def:symbolic_dfa}. Here, we follow the representation technique used by~\cite{ZTLPV17}, where the propositional formulas for the transition function and accepting states are represented by Binary Decision Diagrams~(BDDs). Moreover, the reversal step is again combined with the symbolic subset construction operation into one step. 
Thus, for state variable $s$, in order to construct BDD for formula $\delta_s: 2^\S \times 2^\P \rightarrow \{0,1\}$, we have to take care of switching the current and successor states of a given transition from the reverse DFA. The same for exchanging BDDs of initial and accepting states.

As for the state-space pruning, formulas represented as BDDs allow us to perform the usual Boolean operations, such as conjunction, disjunction and quantifier elimination. After obtaining the set of reachable states $r(Z)$, during the computation of the winning states as described in Section~\ref{sec:symbolicltlf}, we need to restrict $t_i$ after each step to only those values of $Z$ that correspond to reachable states. There are two ways in which we can do this. The most obvious way is to simply take the conjunction of $t_i(Z, Y)$ with $r(Z)$, thus removing all assignments that correspond to unreachable states. The second option, since we are using BDDs to represent the Boolean formulas $t_i$ and $r$, is to apply the standard BDD $\mathsf{Restrict}$ operation~\cite{cudd}. The BDD produced by $\mathsf{Restrict}(t_i, r)$ still returns $1$ for all satisfying assignments to $t_i$ that also satisfy $r$. Those satisfying assignments that do not satisfy $r$, however, are selectively mapped to either $1$ or $0$, using heuristics to try to make a choice that will lead to a smaller BDD. This corresponds to essentially choosing to keep a subset of the unreachable states if that will lead to a smaller symbolic representation of the set of winning states.

Note that the predecessor computation used to compute $t_{i+1}$ may add unreachable states, therefore it is necessary to apply the conjunction or restriction operation at every iteration, rather than just once. 

\subsection{Experimental Evaluation}
In order to evaluate the Hopcroft's and Brzozowski's minimization algorithms starting from \LTLf formulas, we focus on
the context of temporal synthesis. To do so, we conducted extensive experiments over different classes of benchmarks curated from prior works, spanning classes of realistic and synthetic benchmarks~\cite{ZTLPV17,TabajaraV19,BLTV}.
The benchmarks consist of two classes. The first class of benchmarks is the \emph{Random} family, composed of 1000 \LTLf formulas formed by random conjunction, generated as described in~\cite{ZTLPV17}.
%the \emph{Random} family, which is comprised of random conjunctions of smaller formulas, generated following the procedure described in~\cite{ZhuTLPV17}. We generated 1000 such \LTLf formulas. 
%composed of 1000 \LTLf formulas formed by random conjunctions.
The second one is from~\cite{TabajaraV19,BLTV}, and describes two-player games, split into the \emph{Single-Counter}, \emph{Double-Counters} and \emph{Nim} benchmark families. Here we assign the agent as the first-player. It should be noted that, although different player order might lead to different realizability result, the automata minimization performance, nevertheless, stays the same.
% Since the formulation of \LTLf synthesis used in~\cite{TabajaraV19} assumes that the environment acts first, the \LTLf formulas had to be modified slightly to adapt to our formulation, where the agent acts first. 

The results shown here represent the end-to-end execution of the synthesis algorithms, from an \LTLf specification to a winning strategy. Therefore, they include both the time for \LTLf-to-DFA compilation and game solving. All tests were run on a computer cluster with exclusive access to a node with Intel(R) Xeon(R) CPU E5-2650 v2 processors.
%running at 2.60GHz. 
Timeout was 1000 seconds and memory out was 8G.

Since there are two ways of performing Brzozowski's algorithm, as presented in Section~\ref{sec:brzozowski}, we have in total 3 different approaches, namely Hopcroft, Explicit-Brzozowski and Symbolic-Brzozowski. As introduced in Section~\ref{sec:symbolic_brzozowski_implementation}, during symbolic state-space pruning we are able to either apply restriction or conjunction to access only reachable states during game-solving. We show only the results using restriction, as the difference is not significant and in most cases restriction gives slightly better results.

Figure~\ref{fig:synthesis-random} shows a cactus plot~\footnote{Figures best viewed on a computer.} 
comparing
how the three different approaches perform
%the performance of three different approaches 
on \emph{Random} benchmarks. The curves show how many instances can be solved with a given timeout. The further up the curve is, the more benchmarks could be solved in less time. The graph shows that Hopcroft's minimization algorithm is in fact able to solve many more cases than both versions of Brzozowski's algorithm. Furthermore, the explicit Brzozowski approach slightly outperforms the symbolic one. In spite that with time limit of 10 seconds the symbolic version is able to handle more cases than the explicit one, if we take 1000 seconds as the time limit the explicit version is able to handle in total more cases than the symbolic one. This shows that the explicit version is more scalable than the symbolic one. Increasing the time limit does not change the results, since unsolved instances reached the memory limit.
%The reason that the symbolic one shows an almost flat curve in the end is that it reaches the memory limit of 8G.

    \begin{figure}[t]
    	\centering
    	% Requires \usepackage{graphicx}
    	\includegraphics[width=0.6\linewidth]{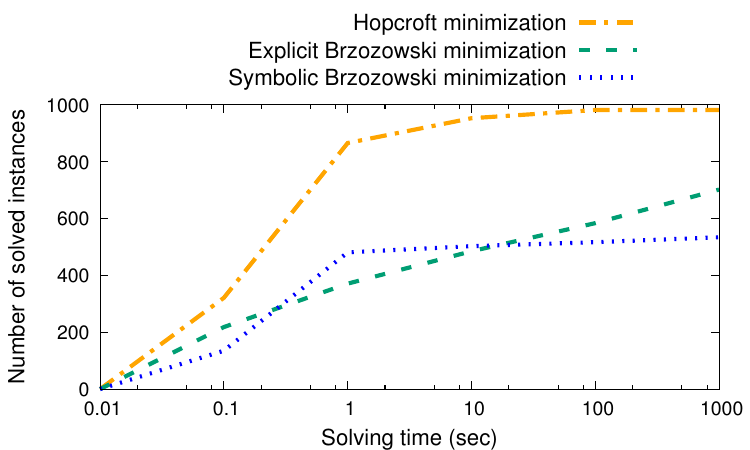}\\
    	\caption[small]{Total Running time with Hopcroft's or Brzozowski's minimization algorithms on \emph{Random} benchmarks.}
    	\label{fig:synthesis-random}
    \end{figure}
    \begin{figure}[t]
    	\centering
    	% Requires \usepackage{graphicx}
    	\includegraphics[width=0.7\linewidth]{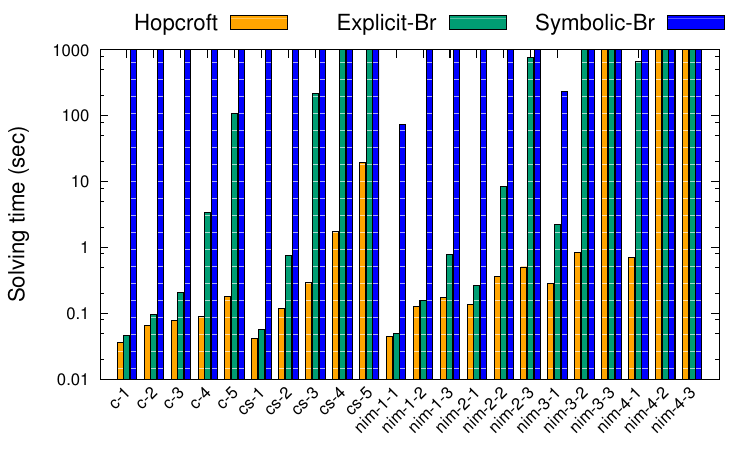}\\
    	\caption[small]{Total Running time with Hopcroft's or Brzozowski's minimization algorithms on \emph{Single-Counter}, \emph{Double-Counters}  and \emph{Nim} benchmarks.
    	%Labels indicate the number of bits.
    	}
    	\label{fig:synthesis-nonrandom}
    \end{figure} 
The results are not much different in the case of the non-random benchmarks, shown in Figure~\ref{fig:synthesis-nonrandom}. There we can see that symbolic Brzozowski's algorithm timed out for the vast majority of instances, only being able to solve the smaller instances of the \emph{Nim} family, and none of the instances of the \emph{Single-Counter} and \emph{Double-Counters} families. The explicit version performs slightly better, being able to handle some smaller instances of the \emph{Single-Counter} and \emph{Double-Counters} families. Hopcroft's algorithm, on the other hand, can solve a large number of instances within the timeout.

The results shown above allow us to answer the question of which minimization algorithm is more efficient in the context of temporal synthesis. Our data points to Hopcroft's algorithm as the better choice. It might seem surprising that Hopcroft's algorithm outperforms Brzozowski so significantly. The symbolic version in particular was expected to benefit from the fact that it is able to avoid ever having to construct the explicit \DFA for the \LTLf formula, instead constructing only the \DFA for the reverse language, which should be exponentially smaller. Yet, it fails to compete even against the explicit version. Understanding the failure of Brzozowski's algorithm requires a more in-depth investigation of the internals of the minimization procedure. We perform this analysis in the next section.
\section{Analysis and Discussion} \label{sec:analysis}
To understand the reasons for the results that we observed in Section~\ref{sec:experiments}, we have taken a closer look at the comparison between the minimal DFA for the formula and the DFA for the reverse language obtained in the first half of the Brzozowski's construction. 
We started by measuring the number of states of the \DFA and reverse \DFA constructed for \emph{Random} formulas. Figure~\ref{fig:dfa-random} displays a scatter plot~(in log scale) comparing the two for each instance. The blue curve indicates an exponential blowup of the number of states of the reverse DFA. In addition, the gray curve represents the points where the x-axis value is equal to the y-axis value. Thus points above the gray curve represent instances where the minimal \DFA is larger than the reverse \DFA. In several cases, the reverse \DFA is indeed smaller, as expected. We observe, however, that there is a significant number of cases where the two automata tend to have approximately the same number of states, which differs from what the theory would lead us to believe. In such cases, the benefits of using Brzozowski's construction disappear, as we can expect no advantage in constructing the minimal reverse DFA instead of directly constructing the DFA for the formula.
\begin{figure}
\centering
\begin{minipage}{.4\textwidth}
  \centering
  \includegraphics[width=\linewidth]{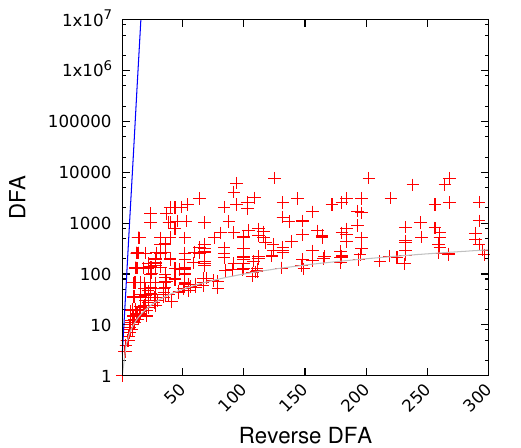}
% 	\caption[small]{Number of states for reverse DFA and DFA on \emph{Random} benchmarks.}
	\caption[small]{\#states on \emph{Random} benchmarks.}
	\label{fig:dfa-random}
\end{minipage}%
\begin{minipage}{.6\textwidth}
  \centering
  \includegraphics[width=\linewidth]{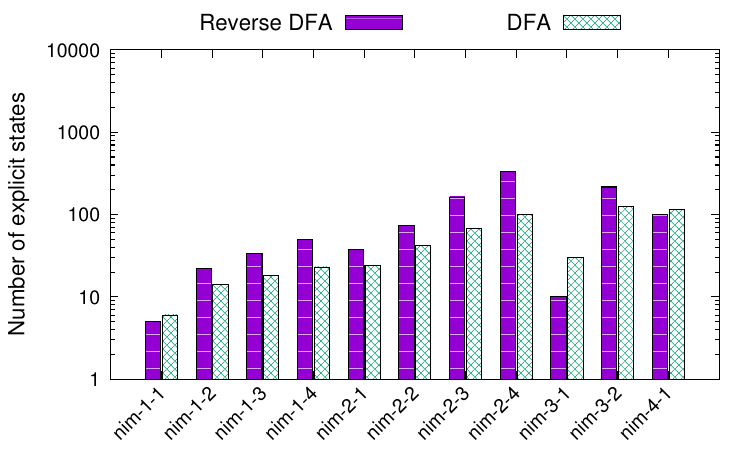}
% 	\caption[small]{Number of states for reverse DFA and DFA on \emph{Nim} benchmarks.}
	\caption[small]{\#states on \emph{Nim} benchmarks.}
	\label{fig:dfa-nim}
\end{minipage}
\end{figure}
% \begin{figure}
% 	\centering
% 	\includegraphics[width=0.6\linewidth]{figures/random-states-DDFA-RDFA.pdf}
% 	\caption[small]{Number of states for reverse DFA and DFA on \emph{Random} benchmarks.}
% 	\label{fig:dfa-random}
% \end{figure}

Note, furthermore, that even in those cases where the reverse \DFA is indeed smaller, these points are far below the blue curve, indicating that they are not \emph{exponentially} smaller. This is a problem because, after being reversed again, the reverse DFA will go through a subset construction, which causes an exponential blowup. This is represented in the symbolic version of Brzozowski's algorithm by the number of state variables in the symbolic representation being linear in the number of the states of the reverse \DFA, instead of logarithmic in the number of states of the \DFA. Therefore, unless the reverse \DFA is exponentially smaller, the number of state variables will be larger than in the symbolic representation of the explicitly-constructed minimal DFA. This in turn impacts the performance of the winning-strategy computation.
%This in turn means that the state space after subset construction will be larger, making it harder to prune unreachable states for obtaining the final minimal DFA.
The increase in the state space during subset construction seems to be the main reason for failures of the Brzozowski approach. This is because the minimal reverse-\DFA construction itself already takes comparable time to the minimal \DFA construction and succeeds in almost all 1000 benchmarks. Moreover, Brzozowski’s method requires further step of determinization on the reverse-\DFA, which leads to another exponential blowup that takes a lot of costs, and therefore accounts for the failure of the performance.

Interestingly, although the symbolic state-space pruning helps with the large state space during synthesis, significantly reducing the size of the BDDs representing the sets of winning states at each iteration, the computation of the set of reachable states itself ends up consuming a majority of the running time. So it turns out to not be helpful in getting the symbolic version of Brzozowski's algorithm scale.
%
% \begin{figure}
% 	\centering
% 	\includegraphics[width=0.6\linewidth]{figures/nim-states-DDFA-RDFA.pdf}
% 	\caption[small]{Number of states for reverse DFA and DFA on \emph{Nim} benchmarks.}
% 	\label{fig:dfa-nim}
% \end{figure}
% 
% 
\begin{figure}
	\centering
	\includegraphics[width=0.6\linewidth]{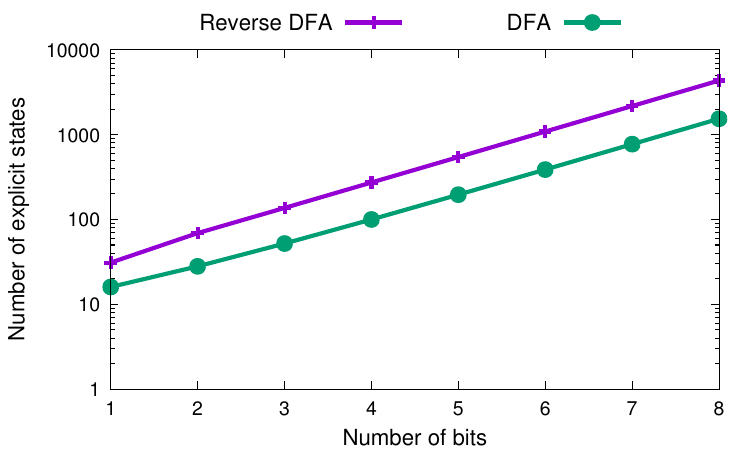}
% 	\caption[small]{Number of states for reverse DFA and DFA on \emph{Single-Counter} benchmarks.}
	\caption[small]{\#states on \emph{Single-Counter} benchmarks.}
	\label{fig:dfa-scounter}
\end{figure}
% 
%\begin{figure}
%	\centering
%	\includegraphics[width=0.7\linewidth]{vmcai_figures/counters_states_DDFA_RDFA.pdf}
%	\caption[small]{Number of states for reverse DFA and DFA on \emph{Double-Counters} benchmarks.}
%	\label{fig:dfa-dcounters}
%\end{figure}
% 
The situation for the non-random benchmarks is even more extreme. Figures~\ref{fig:dfa-nim} and~\ref{fig:dfa-scounter} show the comparison of \DFA and reverse-\DFA size for the \emph{Nim} and \emph{Single-Counter} benchmarks, respectively~\footnote{The plot for the \emph{Double-Counters} benchmarks shows analogous results to the \emph{Single-Counter} benchmarks, and can be found in the appendix. The full version is on arXiv}. 
With the exception of a few of the smaller benchmarks in the \emph{Nim} family, in these instances the \DFA is actually smaller than the reverse \DFA.
The results for the counter benchmarks in particular are useful to better understand where our assumptions are violated, as we can observe the scalability of the automata in terms of the number of bits $n$ in the counter, which is proportional to the formula length. The plots (in log scale) show that the reverse \DFA grows exponentially with $n$, which is the predicted behavior. Yet, the \DFA is exponential as well, rather than doubly-exponential. 

These results highlight an important detail that is easily overlooked in the justification for the reverse DFA construction in Section~\ref{sec:reversedfa}: the theoretical lower bounds on automata sizes refer to the \emph{worst-case}. This means that even though there are formulas for which the smallest \DFA is doubly exponential, it might be that such cases occur very rarely. Our experimental results suggest that this might indeed be the case. This is consistent with previous results from~\cite{TabakovRV12}, that minimal \DFA constructed from temporal formulas are often orders of magnitude smaller in practice than the corresponding \NFA. 
Thus, even though the worst-case size of the reverse \DFA is exponentially smaller, our conclusion is that in practice the worst case is not common enough for making an approach based on constructing the reverse \DFA worthwhile. Therefore, directly constructing the minimal \DFA using Hopcroft's algorithm seems to be a better option for synthesis than employing Brzozowski's construction.

Furthermore, note that the size of the reverse DFA of \LTLf formula $\phi$ is actually the size of the minimal DFA of \pltlf formula $\phi^R$. That is to say, despite the fact that the \DFA of \pltlf is supposed to be exponentially smaller than the \DFA of \LTLf, referring to the theoretical advantage of \PLTLf over \LTLf when compiled into the corresponding DFA~\cite{GiacomoSFR20}, our data suggests that this advantage may not be common in practice. Thus, we believe that this disappearing advantage applies not only to reactive synthesis but also  to other applications of \LTLf that use compilation to \DFA. In the future, we would like to revisit this problem in other applications to confirm our conjecture.

\section*{Acknowledgments}
Research partially supported by the ERC Advanced Grant WhiteMech (No. 834228), the EU ICT-48 2020 project TAILOR (No. 952215), National Key R\&D Program of China (2020AAA0107800), NSF grants IIS-1527668, CCF-1704883, IIS-1830549, DoD MURI grant N00014-20-1-2787, and an award from the Maryland Procurement Office. 

\clearpage

\nocite{*}
\bibliographystyle{eptcs}
\bibliography{ref}

\end{document}